\documentclass[review]{elsarticle}

\usepackage{lineno}
\usepackage{hyperref}
\biboptions{authoryear}

\usepackage{epsfig, amsmath, amsfonts, amssymb, amsthm,multirow,algorithm,algorithmic}
\usepackage{float}
\usepackage{diagbox}
\usepackage{graphicx}
\usepackage{setspace}

\usepackage{enumerate}
\usepackage{hyperref}
\hypersetup{citecolor=blue,colorlinks=true}

\theoremstyle{plain}
\newtheorem{thm}{Theorem}[section]

\newtheorem{proposition}[thm]{Proposition}

\newcommand{\bit}{\begin{itemize}}
\newcommand{\eit}{\end{itemize}}

\newcommand{\ben}{\begin{enumerate}}
\newcommand{\een}{\end{enumerate}}

\numberwithin{equation}{section}

\theoremstyle{plain}

\theoremstyle{remark}
\newtheorem*{remark}{Remark}

\newcommand{\x}{{\boldsymbol{x}}}
\newcommand{\X}{{\boldsymbol{X}}}

\newcommand{\btheta}{{\boldsymbol{\theta}}}

\DeclareMathOperator*{\argmin}{arg\,min}
\newcommand{\be}{\begin{equation}}
\newcommand{\ee}{\end{equation}}
\newcommand{\ba}{\begin{eqnarray}}
\newcommand{\ea}{\end{eqnarray}}
\newcommand{\bee}{\begin{equation*}}
\newcommand{\eee}{\end{equation*}}
\newcommand{\baa}{\begin{eqnarray*}}
	\newcommand{\eaa}{\end{eqnarray*}}

\begin{document}

\begin{frontmatter}

\title{Robust K-means Clustering using the Density Power Divergence Measure}

\author{Anirban Mondal}
\address{Case Western Reserve University, Cleveland, OH 44106, USA}

\author{Paromita Banerjee}
\address{John Carroll University, University Heights, OH, 44122, USA}

\author{Abhijit Mandal}
\address{The University of Texas at El Paso, El Paso, TX}

\begin{abstract} 

We introduce a robust clustering method, MK-means DPD, that estimates cluster centers and covariance matrices using density power divergence (DPD) measures combined with Mahalanobis distance, making it resistant to outliers and adaptable to heterogeneous, elliptical clusters, unlike the classical K-means algorithm. Since Mahalanobis distance-based K-means lacks a general convergence guarantee, we further introduce a convergent variant, Density-Consistent MK-means DPD (DC-MK-means DPD), which redefines the cluster assignment step in terms of a pointwise DPD loss. We prove a formal theorem establishing that the resulting algorithm converges in a finite number of steps. We also propose two new robust internal evaluation indexes, a Median Davies-Bouldin Index and a Trimmed Calinski-Harabasz Index, to ensure that performance comparisons are not themselves distorted by outliers. The efficacy of the proposed methods is demonstrated on simulated data, showing superiority over existing methods, and on two real datasets: Iris data, to identify similar species, and Covid-19 case fatality rate and infection rate data for countries worldwide, examining the resulting clusters' geographic and socio-economic patterns.

\end{abstract}

\begin{keyword} 
Clustering, K-means, Density power divergence, Mahalanobis distance 
\end{keyword}
		
\end{frontmatter}

\section{Introduction}

Cluster analysis, a statistical technique employed for discerning groups exhibiting similar patterns in a dataset, stands as a widely used unsupervised learning method in the machine learning community. It partitions data into groups that are internally homogeneous and mutually heterogeneous \citep{sharma1995applied}. Its applications span diverse fields, including education, finance, agriculture, genetics, and social network analysis, underscoring its versatility in uncovering patterns and relationships within data. As the volume of unlabeled data continues to grow, clustering has become, and will remain, one of the most significant techniques in machine learning.

A variety of clustering algorithms exist in the literature, spanning partition-based methods such as K-means \citep{MacQueen1967} and K-medoids \citep{kaufman2009finding}, which divide the dataset into a predetermined number of clusters; hierarchical methods, such as Ward's method \citep{Ward63} and Single Linkage clustering, which build a nested hierarchy of clusters through recursive merging or splitting; density-based methods, such as DBSCAN \citep{DBSCAN}, which group points based on local density; distribution-based and model-based methods, which model the data as arising from an underlying probability distribution; grid-based methods, which partition the data space into cells; and constraint-based methods, which incorporate user-specified constraints into the clustering process. Each approach has its own advantages and disadvantages, and the choice of algorithm depends on the characteristics of the data and the problem at hand \citep{tan2006introduction}.
 
This article explores various adaptations of the K-means clustering algorithm, the most widely recognized partition-optimization algorithm. The K-means algorithm segments observations into K clusters by minimizing within-cluster variances, with each observation being assigned to the cluster having the nearest cluster center.

An important consideration in applying the K-means algorithm is the suitability of a particular distance measure to accommodate clusters of specific configurations. While the commonly used Euclidean distance measure is effective for clusters with roughly spherical, homogeneous covariance matrices, it proves inadequate for clusters with elliptic shapes or heterogeneous variances across different dimensions—frequent occurrences in real-life applications. In such cases, the utilization of the Mahalanobis distance is advantageous, as this distance metric takes into account the variability within clusters. The incorporation of the variance-covariance matrix in the distance metric enhances flexibility for elliptical shapes and heterogeneity across multiple variables in the data. It is noteworthy that Mahalanobis is a generalization of Euclidean distance, coinciding with the latter when variables are homogeneous and uncorrelated. However, the use of the Mahalanobis distance necessitates the estimation of additional parameters of the covariance matrix \citep{Gnanadesikan1993}, which can pose a challenge, especially for high-dimensional data.

A shared limitation of the traditional Euclidean distance-based K-means algorithm and the Mahalanobis distance-based K-means algorithm is their susceptibility to the impact of outliers. The presence of outliers significantly influences both the determination of cluster centers and the computation of distance metrics. The distorted estimation of cluster centers and distance metrics, arising from the presence of outliers, often results in inappropriate clustering allocations for both the Euclidean distance-based K-means algorithm. Similarly, the Mahalanobis distance-based K-means algorithm (MK-means) exhibits poor performance in datasets contaminated by outliers due to distorted estimations of cluster centers and covariance matrices.

To address this challenge, we propose a robust K-means clustering method where the estimation of cluster centers and covariance matrices relies on density power divergence (DPD) measures. The Mahalanobis distance serves as the distance metric, accommodating elliptical clusters and heterogeneous correlated variables within the dataset. The robustness to extreme values inherent in the DPD measure-based estimation method enhances the resilience of the proposed method to outliers in the data. Adjusting the tuning parameter in the DPD measure according to the degree of outlier contamination allows for obtaining optimal robust estimates for cluster centers, covariance matrices, and the corresponding clusters.

A further contribution of this article concerns the convergence behavior of the proposed algorithm. Mahalanobis distance-based K-means clustering, on which our method is built, is known to lack a general convergence guarantee, since unconstrained re-estimation of the cluster covariance matrices can prevent the underlying distance objective from decreasing monotonically. To address this, we introduce a convergent variant of the proposed algorithm, which we call Density-Consistent MK-means DPD (DC-MK-means DPD), in which the cluster assignment step is redefined in terms of a pointwise density power divergence loss rather than the raw Mahalanobis distance. We prove that this modification restores a well-defined joint objective function that is non-increasing at every iteration, and establish a formal convergence theorem showing that DC-MK-means DPD converges in a finite number of steps, analogous to the classical convergence guarantees of Lloyd's and MacQueen's algorithms for Euclidean K-means.

As a further contribution, we also address a related but often overlooked issue: the internal metrics conventionally used to evaluate clustering performance, such as the silhouette index, the Davies-Bouldin Index (DBI), and the Calinski-Harabasz Index (CHI), are themselves sensitive to outliers, since they rely on averages and squared distances computed around cluster means. In addition to using existing robust alternatives, namely trimmed $R^2$ and the medoid-based silhouette index (PAMSIL), we propose two new robust internal evaluation indexes based on DBI and CHI, referred to as the Median Davies-Bouldin Index and the Trimmed Calinski-Harabasz Index. The former uses medians instead of averages to compute both the cluster centers and the within-cluster dispersion, while the latter uses trimming, preserving the variance-decomposition identity on which the Calinski-Harabasz Index depends. These robust indexes are used in place of their classical counterparts throughout the performance evaluations in this article, ensuring that the reported comparisons are not themselves distorted by the very outliers the proposed clustering method is designed to withstand.

The article is structured as follows: In Section 2, we introduce the K-means algorithm and the Mahalanobis distance-based K-means (MK-means) algorithm. Section 3 introduces the DPD measure and its corresponding parameter estimation method, along with our proposed robust K-means algorithm utilizing Mahalanobis distance and DPD measure, as well as the convergent DC-MK-means DPD variant and its associated convergence theorem. Section 4 introduces the internal and external evaluation metrics used to assess clustering performance, including the two new robust internal indexes proposed in this article, before applying the proposed methods to simulated data and comparing their performance with existing methods. In Section 5, we apply the methodology to two real-world applications. Section 6 concludes the article with a concise summary and discussion of the methods and their applications, along with suggestions for future research in this context.

\section{K-means using Mahalanobis distance}
In this section, we describe the widely used K-means algorithm where the distance metric used is the Mahalanobis distance. Let $\x_1, \x_2, \cdots, \x_n$ be $n$ observations in a p-dimensional space. Also, assume that there are $K$ clusters. The classical K-means algorithm minimize $\Delta(C_1, C_2, \cdots,C_K)= \sum_{k=1}^K\sum_{i\in C_k}||\x_i- \bar{\x_k}||^2$, where $C_k$ denotes the set of points which are assigned to the $k$-th cluster, $\bar{x_k}=\frac{1}{n_k}\sum_{i\in C_k}\x_i$, and $n_k$ is the number of points in the $k$-th cluster. The K-means algorithm using the Mahalanobis distance, from here-in referred to as MK-means, is given by the following steps.

\begin{algorithm}[H]
\caption{K-means using Mahalanobis distance (MK-means)}
\label{MK-means}
\begin{enumerate}
    \item Set initial values of $K$ cluster centers $(\hat{\mu}_{1}, \cdots, \hat{\mu}_{K})$ 
    \item Assign $\x_1$ to the cluster $k$ if the euclidean distance between $\x_1$ and $\hat{\mu}_{k}$ is the smallest. Repeat this for $\x_2, \x_3, \cdots \x_n$. Let $C_{0k}$ denote the set of points which are assigned to the $k$-th cluster, $k=1,2, \cdots, K$. 
    \item Estimate $\hat{\Sigma}_{k} = \frac{1}{n} \sum_{i\in C_{0k}}(\x_i-\hat{\mu}_{k})(\x_i-\hat{\mu}_{k})^T$, $k=1,2, \cdots, K$.
    \item Calculate the distance of $\x_i$ with the cluster center $\hat{\mu}_{k}$ by the Mahalanobis distance given by  $d(\x_i, \hat{\mu}_{k})=(\x_i-\hat{\mu}_{k})\hat{\Sigma}_{k}^{-1}(\x_i-\hat{\mu}_{k})^T$, $k=1,2, \cdots, K$. Assign $\x_i$ to cluster $m$ if $\min_{k}(d(\X_i, \hat{\mu}_{k}))=d(\X_i, \hat{\mu}_{m}))$. Repeat this for $i=1,2,\cdots,n$ and obtain the new clusters. Let $C_k$ denote the set of points assigned to the $k$-th cluster, $k=1,2,\cdots,K$.
    \item Update $\hat{\mu}_{k}$ and  $\hat{\Sigma}_{k}$ by the maximum likelihood estimators: $\hat{\mu}_{k}=\frac{1}{n_k}\sum_{i\in C_k}\x_i$, and $\hat{\Sigma}_{k}=\frac{1}{n_k}\sum_{i\in C_k}(\x_i-\hat{\mu}_{k})(\x_i-\hat{\mu}_{k})^T$.
    \item Repeat Steps 4 and 5 until the changes in $\hat\mu_k$ and $\hat\Sigma_k$ between successive iterations fall below a pre-specified tolerance, or until a pre-specified maximum number of iterations is reached.
\end{enumerate}
\end{algorithm}

Utilizing Mahalanobis distance within the K-means clustering framework presents a significant advantage over the traditional K-means with Euclidean distance, especially when dealing with clusters exhibiting heterogeneity and elliptical shapes. The outcomes of both algorithms align when clusters are homogeneous and spherical. However, challenges emerge in the selection of initial cluster centers and covariances, critical for the successful execution of the algorithm. A notable concern is the potential impact of incorrectly specified initial cluster covariance, leading to inappropriate clusters. \cite{Melnykov} proposed an algorithm to address this challenge, involving the identification of a core group of points with a high concentration of neighbors to serve as a representative 'core' for the selected cluster. This core group aids in providing an initial estimate of the covariance matrix, and we adopt a similar approach in our proposed algorithm.

Another significant drawback of the aforementioned algorithm is its susceptibility to outliers, influencing the maximum likelihood estimators (MLE) of cluster centers and covariance matrices in Step 3. This sensitivity to extreme observations hampers the algorithm's performance. To tackle this issue, we introduce a robust algorithm that employs density power divergence (DPD)-based estimators instead of MLE-based estimators. The details of this innovative algorithm are expounded upon in the subsequent section.

\section{Robust K-means algorithm}
Before we introduce our algorithm, we first introduce the DPD measure, some of its robustness properties, and the estimation of parameters from a dataset using DPD.
\subsection{The DPD Measure and Parameter Estimation}
The density power divergence (DPD) measure between the model density $f_\btheta$ with $\btheta \in \Theta$ and the empirical (or true) density $g$ is defined as
	\be 
	d_\alpha(f_\btheta, g) = 
	\left\{
	\begin{array}{ll}
		\int_y\left\{ f^{1+\alpha}_\btheta(y)-\left( 1+\frac{1}{\alpha}\right) f^{\alpha }_\btheta(y)g(y)+
		\frac{1}{\alpha}g^{1+\alpha}(y)\right\} dy, & \text{for}\mathrm{~}\alpha>0, \\%
		[2ex]
		\int_y g(y)\log\left( \displaystyle\frac{g(y)}{f_\btheta(y)}\right) dy, & \text{for}
		\mathrm{~}\alpha=0,
	\end{array}
	\right. 
	\label{dpd}
	\ee
	where $\alpha$ is a tuning parameter \citep{MR1665873}. 	For $\alpha=0$, the DPD is obtained as a limiting case of $\alpha \rightarrow 0^+$; and the measure is called the Kullback-Leibler divergence. 
	Given a parametric model, the estimation of $\btheta$ involves minimizing the DPD measure with respect to $\btheta$ over its parametric space $\Theta$. This estimator is referred to as the Minimum Power Divergence Estimator (MDPDE). For $\alpha=0$, it is analogous to maximizing the log-likelihood function, making the Maximum Likelihood Estimator (MLE) a specific instance of MDPDE. The tuning parameter $\alpha$ governs the balance between the efficiency and robustness of MDPDE—robustness increases with higher $\alpha$ but comes at the cost of decreased efficiency. The DPD and its associated techniques are extensively explored in statistical inference for parameter estimation and testing, as evidenced in various studies, such as \cite{MR3011625,combinedJSPI, MR3435166,  basu2017wald,composite2017,metrika,emptyCell,inlier,gof}. 

Here we apply MDPDE within the realm of cluster analysis, where the primary aim is to assemble similar data into clusters based on shared characteristics. MDPDE utilizes the concept of density power divergence, a metric gauging the dissimilarity between two probability distributions, to estimate parameters linked to the underlying distribution within each cluster. In this study, particular attention is given to the utilization of the normal distribution. Parameters of the normal distribution are estimated using DPD measure. This process effectively captures essential features of data distributions within clusters. By incorporating MDPDE with the normal distribution, we enhance our capability to achieve accurate and robust clustering results. 

Now we briefly describe the empirical methods on estimating the parameters ($\btheta$) assuming we have $m$ samples from the true distribution, viz. $y_1, y_2, \ldots, y_m$.  

Note that the third term of divergence of equation \eqref{dpd} is free of $\theta$, thus the power divergence estimator of $\theta$ can be found by minimizing:
\begin{align}
V_{\alpha, \btheta}({\bf y})=\int f_\btheta^{1+\alpha}(y) d y-\left(1+\frac{1}{\alpha}\right) \frac{1}{m} \sum_{i=1}^m f_\btheta^{(\alpha)}\left(y_i\right),
\label{normal_eqation}
\end{align}
where $f_\btheta(y)$ is the pdf of the Gaussian distribution, as is given by $f_\btheta(y)=\frac{1}{\sqrt{2 \pi} \sigma} \exp \left[-\frac{1}{2}\left(\frac{y-u}{\sigma}\right)^2\right]$. Using this pdf, the first term in Equation \eqref{normal_eqation} can be computed as $\int_y f_\btheta^{1+\alpha}(y) d y=(2 \pi)^{-\frac{\alpha}{2}} \sigma^\alpha(1+\alpha)^{-\frac{1}{2}}$.

So by \eqref{normal_eqation} the parameter estimates can be obtained by minimizing

\begin{align}
V_{\alpha, \mu, \sigma}({\bf y})  =(2 \pi)^{-\alpha / 2} \sigma^{-\alpha}(1+\alpha)^{-\frac{1}{2}}\left[1-\frac{(1+\alpha)^{3 / 2}}{m \alpha} \sum_{i=1}^m \exp \left[-\frac{\alpha}{2}\left(\frac{y_i-\mu}{\sigma}\right)^2\right]\right].
\label{normal_equation1}
\end{align}

For a given data $y_1, y_2, \ldots, y_m$, the above quantity is minimized with respect to the mean $\mu$ and variance $\sigma^2$ to obtain the MDPDE. The tuning parameter $\alpha$ regulates the balance between efficiency and robustness in the MDPDE. As $\alpha$ increases, the robustness measure enhances, but concurrently, efficiency decreases \citep{das2022testing}. We now turn to the estimation of parameters within the normal distribution, a case of particular relevance to the Mahalanobis distance metric used throughout this article.

\textbf{Univariate case ($y \sim N(u,\sigma^2)$):} Here, the parameter $\btheta = (\mu, \sigma)$ can be estimated through the Minimum Density Power Divergence Estimate (MDPDE) by directly minimizing the DPD measure, as given in Equation \eqref{normal_eqation}. Alternatively, the MDPDE can be obtained through an iterative approach, by repeatedly solving the following estimating equations for the weight $\omega_i$, the mean $\mu$, and the variance $\sigma^2$:
\begin{equation}
\omega_i = \exp \left[-\frac{\alpha}{2}\left(\frac{y_i-\mu}{\sigma}\right)^2\right], \quad
\mu = \frac{\sum_{i=1}^m \omega_i y_i}{\sum_{i=1}^m \omega_i}, \quad
\sigma^2 = \frac{\sum_{i=1}^m \omega_i\left(y_i-u\right)^2}{\sum_{i=1}^m \omega_i-\frac{m\alpha}{(1+\alpha)^{3 / 2}}}.
\label{normal_equation2}
\end{equation}
When $\alpha=0$, Equation \eqref{normal_equation2} reduces exactly to the normal equations for the non-robust ordinary least squares estimates, recovering the classical MLE as a special case.

\textbf{Multivariate case ($y \sim N_p(u,\Sigma)$):}
\label{emper_mult}
Analogous to the univariate case described above, we now consider the multivariate setting, which is directly relevant to real-world clustering applications involving multiple correlated variables. In this case, the MDPDE can be obtained by minimizing
\begin{equation}
V_{\alpha, \mu, \Sigma}({\bf y})=(2 \pi)^{-\frac{p\alpha}{2}}|\Sigma|^{-\alpha / 2}(1+\alpha)^{-\frac{p}{2}}\left[1-\frac{(1+\alpha)^{\frac{p+2}{2}}}{m \alpha} \sum_{i=1}^m \exp \left[-\frac{\alpha}{2} B_i\right]\right]
\label{normal_equation4}
\end{equation}
with respect to $\mu$ and $\Sigma$, where $B_i=\left(y_i-\mu\right)^T \Sigma^{-1}\left(y_i-\mu\right)$ denotes the squared Mahalanobis distance of the $i$-th observation from the mean.

As in the univariate case, the MDPDE can alternatively be obtained by solving the following equations iteratively for the mean $u$, the weight $\omega_i$, and the inverse covariance matrix $\Sigma^{-1}$:
\begin{equation}
    u = \frac{\sum_{i=1}^m \omega_i y_i}{\sum_{i=1}^m \omega_i}, \quad
    \omega_i = \exp \left[-\frac{\alpha}{2} B_i\right], \quad
    \Sigma^{-1} = \left(\sum_{i=1}^m \omega_i-\frac{m \alpha}{(1+\alpha)^{\frac{p+2}{2}}}\right)\left(\sum_{i=1}^m \omega_i\left(y_i-\mu \right)\left(y_i-\mu \right)^{\top}\right)^{-1}.    
\end{equation}
This iterative method requires initial values for $\mu$ and $\Sigma$ to be specified in advance. To improve robustness against outliers, we use the scaled median absolute deviation (MAD) method to obtain these initial estimates of $\mu$ and $\Sigma$.

\subsection{Robust K means with Mahalanobis distance and DPD measure }
\label{sec:mkdpd}

We now introduce a robust version of the K-means algorithm with Mahalanobis distance, in which the maximum likelihood method of estimating the cluster center and the cluster variance-covariance matrix $(\mu_k, \Sigma_k)$ in Step 5 of Algorithm \ref{MK-means} is replaced by a robust method that instead minimizes the DPD distance $V_{\mu_k, \Sigma_k}(\textbf{x}_k)$, where $\textbf{x}_k = (x_{k1}, x_{k2}, \ldots, x_{kn_k})$ is the set of points assigned to cluster $k$ and $n_k$ is the number of points in cluster $k$, to estimate the cluster centers $\mu_k$ and cluster variance-covariance matrices $\Sigma_k$, $k=1,2,\ldots,K$.

The proposed clustering algorithm, from here on referred to as MK-means DPD, is given below in Algorithm \ref{MDPDK}.	

\begin{algorithm}[H]
\caption{K-means using Mahalanobis distance and DPD estimators (MK-means DPD)}
\label{MDPDK}
\ben
\item Set an initial values of the cluster centers $(\hat{\mu}_{1}, \cdots, \hat{\mu}_{K})$. 
\item Assign $\x_1$ to the cluster $k$ if the euclidean distance between $\x_1$ and $\hat{\mu}_{k}$ is the smallest. Repeat this for $\x_2, \x_3, \cdots \x_n$. Let $C_{0k}$ denote the indices of the set of points which are assigned to the $k$-th cluster, $k=1,2, \cdots, K$. 
\item Estimate $\hat{\Sigma}_{k} = \frac{1}{n} \sum_{i\in C_{0k}}(\x_i-\hat{\mu}_{k})(\x_i-\hat{\mu}_{k})^T$, $k=1,2, \cdots, K$.
\item Calculate the distance of $\x_i$ with the center of the cluster $(\hat{\mu}_{k}$ by the Mahalanobis distance given by $d(\x_i, \hat{\mu}_{k})=(\x_i-\hat{\mu}_{k})\hat{\Sigma}_{k}^{-1}(\x_i-\hat{\mu}_{k})^T$, $k=1,2, \cdots, K$. Assign $\x_i$ to cluster $m$ if $min_{k}(d(\x_i, \hat{\mu}_{k}))=d(\x_i, \hat{\mu}_{m}))$. Repeat this for $i=1,2,\cdots,n$ and obtain the new clusters. Let $C_{k}$ denote the set of points assigned to the $k$-th cluster, $k=1,2, \cdots, K$.
\item Update $(\hat{\mu}_{k}, \hat{\Sigma}_k^{-1})$ by the robust DPD estimator using the iterative weighted method described in section \ref{emper_mult}
\item Repeat Steps 4-5 until the changes in $(\hat\mu_k, \hat\Sigma_k)$ between successive iterations fall below a pre-specified tolerance, or until a pre-specified maximum number of iterations is reached.
\een
\end{algorithm}

\subsubsection{Choosing initial cluster centers and covariance matrices}
\label{sec:init}
The algorithm is particularly sensitive to the initial choice of cluster centers and covariance matrices for the clusters. Therefore, it is crucial to have accurate initial estimates for the cluster center. A common approach to mitigate this issue is to run the algorithm multiple times (e.g., N times), randomly selecting initial cluster centers from the data range. This strategy is a routine practice in the traditional K-means algorithm. However, this approach encounters computational complexity challenges, especially for high-dimensional data where N needs to be very large. Furthermore, our algorithm employs the Mahalanobis distance, making the selection of an appropriate initial covariance matrix even more challenging. To address this issue, we adopt a similar strategy as described in \cite{Melnykov}. At the outset, we identify a set of points with a dense concentration of neighbors, characterizing the "core" of the selected cluster. These points provide an initial estimate of the covariance matrix using the DPD measure. Subsequently, this estimate is employed to compute Mahalanobis distances for the remaining points. The anticipation is that the closest points should pertain to the same cluster, and the initial surge in distances suggests the inclusion of points from a distinct cluster. This iterative strategy can be replicated K times, with each iteration eliminating the chosen points to prevent the recurrent selection of the same clusters. The iterative approach for constructing initial values for $\mu$ and $\Sigma$ is outlined below. These steps will substitute Steps 1 and 2 in Algorithm \ref{MDPDK}.

\ben
\item Compute the Euclidean distance $d_{i,j}$ for each pair of data points $x_i$ and $x_j$, $i\neq j = 1, 2, \ldots, n$. Then, for every data point $X_i$, compute the sum of the $w$ smallest distances to it $S_{iw}=\sum_{j=1}^{w}d_{i,(j)}$, where $d_{i,(j)}'s$ are the ordered distances. An approximate value of $w$ is taken to be 20.
Set k=1 and the number of points not assigned to any cluster M=n.
\item Randomly assign one of the unassigned points as the cluster center with probabilities $\frac{1}{i_s}(\sum_{l=1}^{M}\frac{1}{i_l})^-1$, $s=1, 2, \ldots, M$, where $i_1, i_2, \ldots, i_M$ are such that $S_{i_1,w}\leq S_{i_2,w}, \leq \ldots \leq S_{i_M,w}$. 
\item Let D be the minimum number of points in a cluster, assign D points nearest to the center of the current cluster $C_k$. We used D = 20, in our simulations.
\item Estimate $\mu_k$ and $\Sigma_k$ by the robust DPD estimator using the iterative weighted method described in section \ref{emper_mult} with the data points assigned to the cluster.
\item Using the $99 \%$ confidence ellipsoid, include all points in the k-th cluster satisfying $(x_i -\mu_k)^T\Sigma_k^{-1}(x_i-\mu_k)\leq \chi^2_{p,0.01}$, where $p$ is the dimension of $x$.
\item Repeat steps 3,4 and 5 a few times (usually no more than 5 times) to obtain a initial estimate of $\mu_k$ and $\Sigma_k$
\item Set k=k+1 and go to step 2 and repeat till $k=K$ or the number of unassigned points reaches the minimum value D. 
\een

This method of estimating initial values for $\mu_k$ and $\Sigma_k$ can also guide us in identifying the optimal number of clusters for a given dataset. For instance, if the actual number of clusters is less than $K$ (the predefined number of clusters), it is anticipated that Step 7 would halt before reaching $K$ clusters, as there would be no more points left to form a new cluster.

\begin{remark}[Advantages of MK-means DPD]
The robustness of MK-means DPD to outliers stems directly from the density power divergence-based estimation of the cluster centers and covariance matrices, which allows the algorithm to produce reliable estimates even when the data contain unusual or erroneous values. This robustness is especially valuable in high-dimensional settings, where outliers and extreme values are more prevalent and conventional distance-based methods are particularly susceptible to their influence. Beyond robust estimation, the algorithm also facilitates the direct detection of outliers, through the use of the estimated cluster centers together with the corresponding confidence ellipsoids, as illustrated in the simulation study of Section 4; notably, this is achieved without removing any observations from the dataset during the clustering process, so no reduction in the available data occurs. A further advantage lies in the flexibility of the tuning parameter $\alpha$, which allows the degree of sensitivity to deviations from the assumed probability model to be adjusted according to the level of contamination present in the data, while the use of the Mahalanobis distance further allows the method to accommodate non-spherical and elliptical cluster shapes. Taken together, these properties make the DPD-based estimation approach well suited to data exhibiting intricate distributions and non-linear relationships among variables.
\end{remark}

\begin{remark}[Drawbacks of MK-means DPD]
Despite these advantages, MK-means DPD has some limitations. The optimization involved in the DPD-based estimation step can be computationally demanding, particularly for large or high-dimensional datasets, since the cluster centers and covariance matrices must be re-estimated iteratively at every step of the algorithm. Selecting an appropriate value for the tuning parameter $\alpha$ can also pose a practical challenge: an inappropriate choice may lead to biased or inefficient estimates, and while $\alpha$ could in principle be selected automatically via cross-validation, doing so would add considerable computational cost. Finally, like many DPD-based methods, MK-means DPD relies on a Gaussian distributional assumption for each cluster; when the true underlying distribution departs substantially from normality, the performance of the algorithm may deteriorate, resulting in biased or inaccurate cluster estimates. As discussed in Section \ref{sec:dc_mkmeans}, MK-means DPD also inherits the convergence limitations of the Mahalanobis K-means algorithm on which it is built, a limitation addressed by the convergent variant introduced there.
\end{remark}

\subsection{A Convergent Variant: Density-Consistent MK-means DPD (DC-MK-means DPD)}
\label{sec:dc_mkmeans}

A limitation of Algorithm \ref{MDPDK} that we have not yet addressed is that it does not come with any convergence guarantee. This is not a limitation unique to the DPD-based robustification introduced in this article; rather, it is inherited directly from the Mahalanobis K-means algorithm (Algorithm \ref{MK-means}) on which Algorithm \ref{MDPDK} is built. \citet{lapidot2018convergence} showed, both analytically and empirically, that the assignment step and the update step of Mahalanobis K-means target two different objectives, and that this mismatch can prevent the overall distance objective from decreasing monotonically across iterations. Writing $\Lambda_k := \hat\Sigma_k^{-1}$ for the precision matrix of cluster $k$, following the notation of \citet{lapidot2018convergence}, and $D_{Mahal} = \sum_{k=1}^K\sum_{i\in C_k} (\x_i-\hat\mu_k)^T\Lambda_k(\x_i-\hat\mu_k)$ for the assignment objective minimized in Step 4 of Algorithm \ref{MK-means}, \citet{lapidot2018convergence} showed that $\partial D_{Mahal}/\partial\Lambda_k$ does not depend on $\Lambda_k$ at all, so that $D_{Mahal}$ is minimized exactly, and trivially, by setting $\Lambda_k=0$ for every cluster (equivalently, $\hat\Sigma_k \to \infty$), a degenerate solution in which every observation is deemed equally close to every cluster, regardless of the data or the assignment. The covariance update actually used in Step 5 of Algorithm \ref{MK-means} and \ref{MDPDK}, by contrast, is not obtained by minimizing $D_{Mahal}$ at all: it substitutes the corresponding maximum likelihood covariance estimator, borrowed from the Gaussian case by analogy, rather than any quantity connected to the assignment objective. In short, the assignment step and the update step target two different objectives, the former trivially minimized by unconstrained covariance inflation and the latter estimating the covariance as though maximizing an unrelated Gaussian likelihood, so there is no single, well-defined objective function that both steps can be shown to jointly decrease, and the monotonicity argument underlying the classical convergence proof of \citet{MacQueen1967} and \citet{lloyd1982least} for Euclidean K-means does not carry over. \citet{lapidot2018convergence} confirmed this behavior empirically: unlike standard K-means and several corrected variants, the Mahalanobis K-means distortion was observed to both increase and decrease unpredictably across iterations. Notably, \citet{lapidot2018convergence} also showed that this problem can be repaired by replacing the assignment criterion with the negative Gaussian log-likelihood, so that both the assignment and update steps act on the same underlying objective; it is precisely this strategy, of aligning the assignment criterion with the objective minimized in the update step, that we generalize below to the DPD setting, and that underlies the convergence guarantee established later in this section.

We now introduce a variant of Algorithm \ref{MDPDK}, which we call \emph{Density-Consistent MK-means DPD} (DC-MK-means DPD), that resolves this issue by replacing the assignment criterion in Step 4 with a quantity that is directly and provably consistent with the objective minimized in the update step, thereby restoring a Lloyd/MacQueen-style convergence guarantee, in the same spirit as the Gaussian ML K-means variant of \citet{lapidot2018convergence} discussed above, but adapted to the DPD-based objective used throughout this article. The key observation is that the multivariate DPD objective $V_{\alpha,\mu,\Sigma}(\x_k)$ used in the update step, given in Equation \eqref{normal_equation4}, is, by construction, an average of per-observation terms, exactly as in the original formulation of the DPD objective of \citet{basu1998robust} given in Equation \eqref{normal_eqation}. Writing $C_\alpha(\Sigma) := (2\pi)^{-p\alpha/2}|\Sigma|^{-\alpha/2}(1+\alpha)^{-p/2}$ and distributing this constant across the sum in Equation \eqref{normal_equation4}, we obtain
\begin{equation}
V_{\alpha,\mu,\Sigma}(\x_k) = \frac{1}{m}\sum_{i=1}^m v_\alpha(x_{ki};\mu,\Sigma), \label{eq:pointwise}
\end{equation}
where the pointwise DPD loss $v_\alpha(x_{ki};\mu,\Sigma)$ is given by
\begin{equation}
v_\alpha(x_{ki};\mu,\Sigma) := C_\alpha(\Sigma)\left[1-\frac{(1+\alpha)^{\frac{p+2}{2}}}{\alpha}\exp\left(-\frac{\alpha}{2}B_{ki}\right)\right], \qquad B_{ki} = (x_{ki}-\mu)^T\Sigma^{-1}(x_{ki}-\mu). \label{eq:vloss}
\end{equation}

This structural fix directly mirrors the repair validated in Section II-C of \citet{lapidot2018convergence}: just as replacing $D_{Mahal}$ with the Gaussian negative log-likelihood $D_{Gauss}$ introduces a $-\log|\Lambda_k|$ term that makes $\partial D_{Gauss}/\partial \Lambda_k$ properly depend on $\Lambda_k$, in contrast to $\partial D_{Mahal}/\partial\Lambda_k$, from which $\Lambda_k$ drops out entirely, the multivariate DPD objective $V_{\alpha,\mu,\Sigma}$ carries the analogous determinant-dependent factor $C_\alpha(\Sigma) \propto |\Sigma|^{-\alpha/2}$, so that its stationarity condition with respect to $\Sigma^{-1}$ is a genuine equation in $\Sigma^{-1}$ rather than one from which $\Sigma^{-1}$ has vanished. Indeed, setting $\partial V_{\alpha,\mu,\Sigma}(\x_k)/\partial \Sigma^{-1} = 0$ yields exactly the estimating equation
\begin{equation}
\Sigma^{-1} = \left(\sum_{i=1}^m \omega_i - \frac{m\alpha}{(1+\alpha)^{\frac{p+2}{2}}}\right)\left(\sum_{i=1}^m \omega_i(\x_i-\mu)(\x_i-\mu)^\top\right)^{-1} \label{eq:sigmastationarity}
\end{equation}
used in Step 5 of Algorithm \ref{MDPDK} (Section \ref{emper_mult}). Consequently, unlike the heuristic covariance substitution used in the Mahalanobis case, the update step of Algorithm \ref{MDPDK} is already a genuine stationary point of the same objective $V_{\alpha,\mu,\Sigma}$ that we now also use for assignment. Because $v_\alpha$ is exactly the pointwise term whose average defines $V_{\alpha,\mu,\Sigma}$ (Equation \eqref{eq:pointwise}), using $v_\alpha$ for assignment as well as for estimation ensures that both steps of the algorithm act on the same underlying objective, resolving the assignment/update mismatch that broke the monotonicity argument for Algorithm \ref{MDPDK}. DC-MK-means DPD is identical to Algorithm \ref{MDPDK} in every respect except that Step 4 assigns each observation to the cluster minimizing $v_\alpha(\x_i;\hat\mu_k,\hat\Sigma_k)$ in place of $B_i$, as detailed below.

\begin{algorithm}[H]
\caption{Density-Consistent MK-means DPD (DC-MK-means DPD)}
\label{DCMDPDK}
\ben
\item Set an initial values of the cluster centers $(\hat{\mu}_{1}, \cdots, \hat{\mu}_{K})$. 
\item Assign $\x_1$ to the cluster $k$ if the euclidean distance between $\x_1$ and $\hat{\mu}_{k}$ is the smallest. Repeat this for $\x_2, \x_3, \cdots \x_n$. Let $C_{0k}$ denote the indices of the set of points which are assigned to the $k$-th cluster, $k=1,2, \cdots, K$. 
\item Estimate $\hat{\Sigma}_{k} = \frac{1}{n} \sum_{i\in C_{0k}}(\x_i-\hat{\mu}_{k})(\x_i-\hat{\mu}_{k})^T$, $k=1,2, \cdots, K$.
\item Calculate the pointwise DPD loss of $\x_i$ with respect to cluster $k$, $v_\alpha(\x_i;\hat\mu_k,\hat\Sigma_k)$, as given in Equation \eqref{eq:vloss}, $k=1,2, \cdots, K$. Assign $\x_i$ to cluster $m$ if $\min_{k}\left(v_\alpha(\x_i;\hat\mu_k,\hat\Sigma_k)\right)=v_\alpha(\x_i;\hat\mu_m,\hat\Sigma_m)$. Repeat this for $i=1,2,\cdots,n$ and obtain the new clusters. Let $C_{k}$ denote the set of points assigned to the $k$-th cluster, $k=1,2, \cdots, K$.
\item Update $(\hat{\mu}_{k}, \hat{\Sigma}_k^{-1})$ by first applying the iterative weighted method described in Section \ref{emper_mult}, warm-started at the current estimate. If the resulting update does not decrease $V_{\alpha,\hat\mu_k,\hat\Sigma_k}(\x_k)$ relative to the current estimate, obtain $(\hat{\mu}_{k}, \hat{\Sigma}_k^{-1})$ instead via a globally convergent Newton-type step with Armijo backtracking, warm-started at the current estimate, as detailed in the remark below.
\item Repeat Steps 4-5 until the changes in $(\hat\mu_k, \hat\Sigma_k)$ between successive iterations fall below a pre-specified tolerance.
\een
\end{algorithm}

\begin{remark}[Ensuring the monotonicity requirement of Step 5]
In the typical case, Step 5 is carried out directly via the iterative weighted method of Section \ref{emper_mult}, warm-started at the previous parameter estimate. In the event that this scheme, so warm-started, fails to return a value of $V_{\alpha,\mu,\Sigma}(\x_k)$ at least as small as at the starting point, Step 5 instead applies a globally convergent Newton-type procedure with an Armijo backtracking line search, warm-started at the same point, following the same general strategy as the univariate hybrid Newton/gradient-descent procedure of \citet{anum2024hybrid} for density power divergence estimation, extended here to the multivariate parameter space $(\mu,\Sigma)$; this procedure is guaranteed to satisfy the required non-increase property. Writing $\Lambda = \Sigma^{-1}$, $\omega_i = \exp(-\frac{\alpha}{2}B_i)$, $B_i = (\x_i-\mu)^T\Lambda(\x_i-\mu)$, and $K = (2\pi)^{-p\alpha/2}(1+\alpha)^{-p/2}$, $A = (1+\alpha)^{(p+2)/2}/\alpha$, the gradients of $V_{\alpha,\mu,\Sigma}(\x_k) = K|\Lambda|^{\alpha/2}\left[1-\frac{A}{m}\sum_{i=1}^m\omega_i\right]$ needed for this procedure are
\begin{align}
\nabla_\mu V &= -\frac{KA\alpha}{m}|\Lambda|^{\alpha/2}\,\Lambda\sum_{i=1}^m \omega_i(\x_i-\mu), \\
\nabla_\Lambda V &= \frac{K\alpha}{2}|\Lambda|^{\alpha/2}\left[\Sigma\left(1-\frac{A}{m}\sum_{i=1}^m\omega_i\right) + \frac{A}{m}\sum_{i=1}^m\omega_i(\x_i-\mu)(\x_i-\mu)^\top\right],
\end{align}
which recover the estimating equations of Section \ref{emper_mult} upon setting $\nabla_\mu V = \nabla_\Lambda V = 0$. Writing $\theta = (\mu,\operatorname{vech}(\Lambda))$ for the stacked parameter vector, where $\operatorname{vech}(\Lambda)$ collects the distinct entries of the symmetric matrix $\Lambda$, a quasi-Newton (BFGS) direction $d^{(j)} = -H_j^{-1}\nabla V(\theta^{(j)})$ is computed at each inner iteration $j$, where $H_j$ is a positive-definite approximation of the Hessian of $V$ built from the gradients above. The step size $t$ is then chosen by Armijo backtracking: starting from $t=1$, $t$ is accepted if
\begin{equation}
V(\theta^{(j)} + t\,d^{(j)}) \le V(\theta^{(j)}) + c_1\, t\, \nabla V(\theta^{(j)})^\top d^{(j)},
\end{equation}
for a fixed constant $c_1 \in (0,1)$, and halved otherwise, until the condition is satisfied. Since $H_j$ is positive definite, $d^{(j)}$ is a genuine descent direction, so this backtracking loop terminates after finitely many halvings, and each accepted step satisfies $V(\theta^{(j+1)}) \le V(\theta^{(j)})$; iterating this procedure therefore yields a final value of $V_{\alpha,\mu,\Sigma}(\x_k)$ no larger than at the warm-started starting point, satisfying the requirement used in the proof of Proposition \ref{prop:monotone}.
\end{remark}

\begin{remark}[Initialization for DC-MK-means DPD]
As with Algorithm \ref{MDPDK}, the choice of initial cluster centers and covariance matrices in Steps 1 and 2 of Algorithm \ref{DCMDPDK} can be replaced by the iterative initialization scheme described in Section \ref{sec:init}, which provides more accurate initial estimates of $\mu_k$ and $\Sigma_k$ than random initialization and thereby improves the practical performance of the algorithm.
\end{remark}

We now show that, unlike Algorithm \ref{MDPDK}, Algorithm \ref{DCMDPDK} admits a well-defined joint objective function that is non-increasing at every iteration. Define, for a partition $\mathcal{C}=(C_1,\ldots,C_K)$ and parameters $\{\hat\mu_k,\hat\Sigma_k\}_{k=1}^K$,
\begin{equation}
Q\left(\mathcal{C},\{\hat\mu_k,\hat\Sigma_k\}\right) := \sum_{k=1}^K \sum_{i \in C_k} v_\alpha\left(\x_i;\hat\mu_k,\hat\Sigma_k\right) = \sum_{k=1}^K n_k\, V_{\alpha,\hat\mu_k,\hat\Sigma_k}(\x_k). \label{eq:Qdef}
\end{equation}

\begin{proposition}[Monotonic non-increase]
\label{prop:monotone}
Let $\mathcal{C}^{(t)}$ and $\{\hat\mu_k^{(t)},\hat\Sigma_k^{(t)}\}$ denote, respectively, the partition and parameter estimates produced after $t$ iterations of Steps 4--5 of Algorithm \ref{DCMDPDK}, and suppose that at each iteration the update in Step 5 is run until the DPD estimating equations (Section \ref{emper_mult}) are satisfied for every non-empty cluster. Then
\begin{equation}
Q\left(\mathcal{C}^{(t+1)}, \{\hat\mu_k^{(t+1)},\hat\Sigma_k^{(t+1)}\}\right) \le Q\left(\mathcal{C}^{(t)}, \{\hat\mu_k^{(t)},\hat\Sigma_k^{(t)}\}\right), \qquad t = 0,1,2,\ldots
\end{equation}
\end{proposition}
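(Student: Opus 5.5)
The argument is the standard Lloyd/MacQueen two-step descent, applied to the joint objective $Q$ of Equation \eqref{eq:Qdef}. I would split one iteration $t \to t+1$ into its assignment half (Step 4) and its update half (Step 5). I would pass through the intermediate quantity
\[
Q^{(t+\frac12)} := Q\left(\mathcal{C}^{(t+1)}, \{\hat\mu_k^{(t)},\hat\Sigma_k^{(t)}\}\right),
\]
which uses the new partition but the old parameters. The goal is the chain
\[
Q\left(\mathcal{C}^{(t+1)},\{\hat\mu_k^{(t+1)},\hat\Sigma_k^{(t+1)}\}\right) \le Q^{(t+\frac12)} \le Q\left(\mathcal{C}^{(t)},\{\hat\mu_k^{(t)},\hat\Sigma_k^{(t)}\}\right).
\]

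\textbf{Assignment step.} With the parameters held fixed, $Q$ is a sum over observations: each $\x_i$ contributes $v_\alpha(\x_i;\hat\mu_{k(i)},\hat\Sigma_{k(i)})$, where $k(i)$ is its cluster label. Step 4 chooses $k(i)$ to minimize exactly this term, for each $i$ separately. Hence the new contribution of each observation is at most its old contribution, and summing over $i$ gives $Q^{(t+\frac12)} \le Q^{(t)}$. Ties are harmless, since any minimizer works. Clusters that become empty simply drop out of the sum, which is consistent with how $Q$ is defined.

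\textbf{Update step.} With the partition held fixed, Equation \eqref{eq:pointwise} shows that $Q$ decouples as $\sum_k n_k V_{\alpha,\mu_k,\Sigma_k}(\x_k)$. Each summand depends only on $(\mu_k,\Sigma_k)$ and $n_k \ge 0$, so it suffices to show that, for each non-empty cluster $k$, Step 5 returns parameters with
\[
V_{\alpha,\hat\mu_k^{(t+1)},\hat\Sigma_k^{(t+1)}}(\x_k) \le V_{\alpha,\hat\mu_k^{(t)},\hat\Sigma_k^{(t)}}(\x_k).
\]
Here I would rely on the construction of Step 5 rather than on the estimating equations themselves. The weighted fixed-point iteration is warm-started at $(\hat\mu_k^{(t)},\hat\Sigma_k^{(t)})$. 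Step 5 accepts its output only if it does not increase $V$; otherwise it falls back to the BFGS/Armijo procedure described in the preceding remark, also warm-started at the old parameters. Each accepted Armijo step satisfies $V(\theta^{(j+1)}) \le V(\theta^{(j)})$, because the direction is a descent direction and the sufficient-decrease condition holds. Telescoping over the inner iterations gives the required inequality. Summing over $k$ then gives the first inequality of the chain.

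\textbf{Main obstacle.} The delicate point is reconciling the hypothesis ``run until the estimating equations are satisfied'' with monotonicity. A stationary point reached by the fixed-point iteration need not have a lower $V$ than the warm start, because the estimating equations characterize critical points, not minima. I would therefore state explicitly that the decrease comes from the acceptance rule and Armijo safeguard in Step 5. The stationarity hypothesis only guarantees that the returned point is a well-defined limit of a descent sequence, with $V$ lower semicontinuity not needed because descent holds at every iterate. A secondary technicality is that $\Lambda$ must stay positive definite along the line search, so that $v_\alpha$ is defined. Backtracking can enforce this by also rejecting steps that leave the cone, which still terminates because the cone is open.
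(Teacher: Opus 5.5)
Your proof is correct and follows the paper's own argument: the same intermediate quantity $Q(\mathcal{C}^{(t+1)},\{\hat\mu_k^{(t)},\hat\Sigma_k^{(t)}\})$, the same observation-by-observation minimization for Step 4, and the same reliance on the warm-started acceptance rule and Armijo fallback of Step 5 (rather than on the estimating equations) to get the per-cluster decrease of $V_{\alpha,\mu,\Sigma}$. Your remark about keeping $\Lambda$ positive definite during backtracking is a useful addition the paper omits, but one claim needs correcting: if the returned point is only the limit of the descent iterates, passing $V(\theta^{(j)})\le V(\theta^{(0)})$ to the limit does require lower semicontinuity of $V$ there, so it is not ``not needed''; this is harmless only because $V$ is continuous on the positive-definite cone.
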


\begin{proof}
We show the two steps of iteration $t+1$ separately.

\emph{(i) The assignment step does not increase $Q$.} Fix $\{\hat\mu_k^{(t)},\hat\Sigma_k^{(t)}\}$. By Equation \eqref{eq:Qdef}, $Q(\mathcal C, \{\hat\mu_k^{(t)},\hat\Sigma_k^{(t)}\}) = \sum_{i=1}^n v_\alpha(\x_i;\hat\mu_{c(i)}^{(t)},\hat\Sigma_{c(i)}^{(t)})$ for any partition $\mathcal C$, where $c(i)$ denotes the cluster to which $\mathcal C$ assigns observation $i$. Step 4 of Algorithm \ref{DCMDPDK} forms $\mathcal{C}^{(t+1)}$ by assigning each $\x_i$ independently to $\arg\min_k v_\alpha(\x_i;\hat\mu_k^{(t)},\hat\Sigma_k^{(t)})$. Since each term of the sum is minimized independently and the terms do not interact across observations, no alternative partition can achieve a smaller value of this sum than $\mathcal C^{(t+1)}$, in particular not $\mathcal C^{(t)}$. Hence $Q(\mathcal C^{(t+1)}, \{\hat\mu_k^{(t)},\hat\Sigma_k^{(t)}\}) \le Q(\mathcal C^{(t)}, \{\hat\mu_k^{(t)},\hat\Sigma_k^{(t)}\})$.

\emph{(ii) The update step does not increase $Q$.} Fix $\mathcal{C}^{(t+1)}$. By construction, Step 5 obtains $(\hat\mu_k^{(t+1)},\hat\Sigma_k^{(t+1)})$ by solving the DPD estimating equations of Section \ref{emper_mult} for the data in cluster $C_k^{(t+1)}$, for each $k$ separately. As shown above (Equation \eqref{eq:sigmastationarity}), these estimating equations are precisely the first-order (stationarity) conditions for minimizing $V_{\alpha,\mu,\Sigma}(\x_k^{(t+1)})$ over $(\mu,\Sigma)$. We require that Step 5 be carried out by a numerical procedure, warm-started at the previous iterate $(\hat\mu_k^{(t)},\hat\Sigma_k^{(t)})$ — typically the iterative weighted method of Section \ref{emper_mult} itself, falling back where necessary to the safeguarded Newton-type procedure detailed in the remark following Algorithm \ref{DCMDPDK} — such that the resulting solution $(\hat\mu_k^{(t+1)},\hat\Sigma_k^{(t+1)})$ satisfies $V_{\alpha,\hat\mu_k^{(t+1)},\hat\Sigma_k^{(t+1)}}(\x_k^{(t+1)}) \le V_{\alpha,\hat\mu_k^{(t)},\hat\Sigma_k^{(t)}}(\x_k^{(t+1)})$ for every $k$ with $C_k^{(t+1)}\neq\emptyset$. Multiplying by $n_k^{(t+1)}$ and summing over $k$ gives $Q(\mathcal{C}^{(t+1)}, \{\hat\mu_k^{(t+1)},\hat\Sigma_k^{(t+1)}\}) \le Q(\mathcal{C}^{(t+1)}, \{\hat\mu_k^{(t)},\hat\Sigma_k^{(t)}\})$.

Combining (i) and (ii), $Q(\mathcal{C}^{(t+1)}, \{\hat\mu_k^{(t+1)},\hat\Sigma_k^{(t+1)}\}) \le Q(\mathcal{C}^{(t+1)}, \{\hat\mu_k^{(t)},\hat\Sigma_k^{(t)}\}) \le Q(\mathcal{C}^{(t)}, \{\hat\mu_k^{(t)},\hat\Sigma_k^{(t)}\})$.
\end{proof}

To turn Proposition \ref{prop:monotone} into a finite convergence guarantee, we impose one further, standard regularity condition: there exists $\delta>0$ such that every $\hat\Sigma_k^{(t)}$ produced by the algorithm has all eigenvalues bounded below by $\delta$. This is the same type of non-degeneracy condition routinely imposed in the finite mixture model literature to rule out the well-known unbounded-likelihood problem that arises when a component's covariance is allowed to collapse around a single point.

\begin{thm}[Finite convergence of DC-MK-means DPD]
\label{thm:convergence}
Suppose that, throughout the run of Algorithm \ref{DCMDPDK}, the eigenvalues of every $\hat\Sigma_k^{(t)}$ remain bounded below by some $\delta>0$, and that ties in the assignment step of Step 4 are broken by a fixed, consistent rule (e.g., retaining the previous assignment). Then the sequence of objective values $Q^{(t)} := Q(\mathcal{C}^{(t)},\{\hat\mu_k^{(t)},\hat\Sigma_k^{(t)}\})$ is non-increasing and bounded below, and hence converges to a finite limit. Moreover, the sequence of partitions $\mathcal{C}^{(t)}$ stabilizes after finitely many iterations, so that Algorithm \ref{DCMDPDK} converges in a finite number of steps to a partition $\mathcal C^*$ and parameters $\{\hat\mu_k^*,\hat\Sigma_k^*\}$ that are simultaneously stable under Step 4 and Step 5.
\end{thm}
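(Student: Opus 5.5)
The proof proceeds in three stages: monotonicity, lower boundedness, and finite stabilization of the partitions.

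\emph{Monotonicity.} Non-increase of $Q^{(t)}$ is exactly Proposition \ref{prop:monotone}, so nothing new is needed there.

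\emph{Lower bound.} We bound each pointwise loss from below uniformly. Since $\exp(-\frac{\alpha}{2}B_{ki})\le 1$, Equation \eqref{eq:vloss} gives
\begin{equation*}
v_\alpha(\x_i;\mu,\Sigma) \ge C_\alpha(\Sigma)\left[1-\frac{(1+\alpha)^{(p+2)/2}}{\alpha}\right],
\end{equation*}
and the bracket is a negative constant depending only on $(\alpha,p)$. Hence it suffices to bound $C_\alpha(\Sigma)$ from above. The eigenvalue condition gives $|\Sigma|\ge\delta^p$, so
\begin{equation*}
C_\alpha(\Sigma)\le(2\pi)^{-p\alpha/2}\delta^{-p\alpha/2}(1+\alpha)^{-p/2}.
\end{equation*}
Multiplying by the negative bracket yields a constant $-M$ with $v_\alpha\ge -M$ for every point, cluster and iteration. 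Consequently $Q^{(t)}\ge -nM$, and a non-increasing sequence bounded below converges to a finite limit $Q^*$.

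\emph{Finite stabilization.} This is the main obstacle. In classical K-means, the update step is a deterministic function of the partition, so only finitely many objective values can occur and strict decrease forbids revisiting a partition. Here Step 5 is warm-started, so the parameters depend on the history and not just on $\mathcal C^{(t)}$; the pigeonhole argument therefore does not apply directly. I would try the following route.

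First, view the algorithm as a map on the state $(\mathcal C,\{\hat\mu_k,\hat\Sigma_k\})$. Because Step 5 is run to a solution of the estimating equations warm-started at the previous iterate, I will argue that when the partition does not change, the warm-started solver started at a fixed point of the estimating equations returns that same point. The iterative weighted map leaves its own fixed points invariant, and the Newton fallback is only triggered when $V$ fails to decrease. So once $\mathcal C^{(t+1)}=\mathcal C^{(t)}$, the parameters repeat, and by the tie rule (retain the previous assignment) the whole state is stable from then on.

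It therefore remains to show that the partition cannot change infinitely often. Suppose it changes at iteration $t+1$. Under the tie rule, some point strictly prefers a new cluster, so step (i) of Proposition \ref{prop:monotone} gives a strict decrease
\begin{equation*}
Q(\mathcal C^{(t+1)},\theta^{(t)})<Q(\mathcal C^{(t)},\theta^{(t)}).
\end{equation*}
To turn this into a uniform gap, I would use compactness. The eigenvalue floor, together with boundedness of the iterates, keeps the parameters in a compact set. Boundedness of the iterates follows because the estimates are weighted means and weighted covariances of the finitely many data points, which is where I expect some extra care to be needed. On this compact set, the finitely many partitions each have a minimal attainable $Q$. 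Since each partition change strictly lowers $Q$ and $Q^{(t)}\downarrow Q^*$, I will show that no partition can be revisited once the value has fallen below its level. Strictly, this requires the gap argument: if infinitely many changes occurred, some pair of partitions would alternate with parameter limits along a subsequence. Continuity of $v_\alpha$ and the fixed-point property of Step 5 at the limit would then make the limit state stable, contradicting the strict preference that triggers the changes.

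Hence only finitely many partition changes occur. The state thereafter is a common fixed point of Step 4 and Step 5, which gives the limiting partition $\mathcal C^*$ and parameters $\{\hat\mu_k^*,\hat\Sigma_k^*\}$ asserted in Theorem \ref{thm:convergence}.
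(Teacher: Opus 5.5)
Your first two stages match the paper. Non-increase is Proposition \ref{prop:monotone}. The lower bound follows from $\exp(-\frac{\alpha}{2}B)\le 1$ together with $|\hat\Sigma_k^{(t)}|\ge\delta^p$; your pointwise bound $v_\alpha\ge -M$ is equivalent to the paper's bound on $V$, and you correctly note the bracket is negative for every $\alpha>0$.

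\textbf{The gap is in the finite-stabilization stage.} Your diagnosis is sound: with warm starts, the Step 5 output need not be a function of the partition alone, so the pigeonhole argument is not automatic. But the compactness argument you substitute does not yield finiteness, for two reasons.

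\emph{The minimal attainable $Q$ does not block revisits.} It only rules out partitions whose minimum lies above $Q^*$. It does not prevent a partition from being revisited infinitely often at ever lower, parameter-dependent values.

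\emph{The limiting argument loses the strictness you need.} Take a subsequence with $\mathcal C^{(t)}=A$, $\mathcal C^{(t+1)}=B\ne A$ and $\{\hat\mu_k^{(t)},\hat\Sigma_k^{(t)}\}\to\{\bar\mu_k,\bar\Sigma_k\}$. The strict preference $v_\alpha(\x_i;\hat\mu_l^{(t)},\hat\Sigma_l^{(t)})<v_\alpha(\x_i;\hat\mu_k^{(t)},\hat\Sigma_k^{(t)})$ only survives as $\le$ in the limit. That may be a tie, which the tie rule resolves by staying in $A$. So even granting that $(A,\{\bar\mu_k,\bar\Sigma_k\})$ is stable under both steps, nothing is contradicted. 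The iterates merely approach that state, and the decrements $Q^{(t)}-Q^{(t+1)}>0$ may be summable, which is fully consistent with $Q^{(t)}\downarrow Q^*$.

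Separately, the upper bound on $\hat\Sigma_k^{(t)}$ does not follow just from ``weighted covariance of the data.'' You must show that the denominator $\sum_i\omega_i-m\alpha(1+\alpha)^{-(p+2)/2}$ stays away from zero.

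\textbf{What is missing} is a finiteness property for the values Step 5 can produce for a given partition; this is what gives a uniform gap.

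\emph{The paper's route.} It assumes, as is standard in M-estimation, that the DPD estimating equations have an essentially unique solution for each fixed non-empty cluster data set. Then $\{\hat\mu_k,\hat\Sigma_k\}$ is determined by $\mathcal C$, and $Q^{(t)}$ ranges over the finite set $\{Q(\mathcal C):\mathcal C\in\mathfrak P\}$. The strict decrease at every partition change, guaranteed by the tie rule, bounds the number of changes by $|\mathfrak P|-1$. This assumption is not among the theorem's stated hypotheses, so you would need to add it explicitly.

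\emph{A weaker sufficient condition.} It is enough that $Q^{(t)}$ take values in some finite set. A non-increasing sequence in a finite set is eventually constant, so no further partition change can occur. Your own observation that the warm-started solver leaves a solution of the estimating equations unchanged then finishes the proof.

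\emph{Rescuing your compactness idea.} If you want to avoid a uniqueness assumption, use compactness without a limiting argument. Step 5 outputs are critical points of $V_{\alpha,\mu,\Sigma}$, which is real-analytic in $(\mu,\Sigma^{-1})$ on $\mathbb{R}^p\times\{\Lambda\succ 0\}$. A real-analytic function has only finitely many critical values on any compact subset of its domain. So once you confine all iterates to a fixed compact set (the eigenvalue floor plus an upper bound on $\hat\Sigma_k^{(t)}$), $Q^{(t)}$ again has finite range.
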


\begin{proof}
\emph{Boundedness below.} Since $\sum_{i=1}^m \exp(-\frac{\alpha}{2}B_{ki}) \in [0,m]$ for any $\mu,\Sigma$, the bracketed term in Equation \eqref{normal_equation4} satisfies
\begin{equation}
1-\frac{(1+\alpha)^{\frac{p+2}{2}}}{\alpha} \;\le\; 1-\frac{(1+\alpha)^{\frac{p+2}{2}}}{m\alpha}\sum_{i=1}^m\exp\left(-\frac{\alpha}{2}B_{ki}\right) \;\le\; 1,
\end{equation}
and since $C_\alpha(\Sigma) > 0$, multiplying through by $C_\alpha(\Sigma)$ gives
\begin{equation}
V_{\alpha,\mu,\Sigma}(\x_k) \;\ge\; C_\alpha(\Sigma)\left(1-\frac{(1+\alpha)^{\frac{p+2}{2}}}{\alpha}\right). \label{eq:Vlowerbound}
\end{equation}
Under the eigenvalue lower bound $\delta$, $|\hat\Sigma_k^{(t)}|\ge \delta^p$, so
\begin{equation}
C_\alpha(\hat\Sigma_k^{(t)}) = (2\pi)^{-p\alpha/2}|\hat\Sigma_k^{(t)}|^{-\alpha/2}(1+\alpha)^{-p/2} \;\le\; (2\pi)^{-p\alpha/2}\delta^{-p\alpha/2}(1+\alpha)^{-p/2} =: M < \infty. \label{eq:Cbound}
\end{equation}
Writing $c := 1-\frac{(1+\alpha)^{\frac{p+2}{2}}}{\alpha}$, which is negative for typical $\alpha \in (0,1)$, multiplying \eqref{eq:Cbound} by the negative constant $c$ reverses the inequality direction, giving $C_\alpha(\hat\Sigma_k^{(t)})\, c \ge Mc$; combined with \eqref{eq:Vlowerbound}, this yields $V_{\alpha,\hat\mu_k^{(t)},\hat\Sigma_k^{(t)}}(\x_k^{(t)}) \ge Mc = -M\left(\frac{(1+\alpha)^{\frac{p+2}{2}}}{\alpha}-1\right) =: -L$ for every $k$ and $t$, and therefore
\begin{equation}
Q^{(t)} = \sum_k n_k^{(t)} V_{\alpha,\hat\mu_k^{(t)},\hat\Sigma_k^{(t)}}(\x_k^{(t)}) \;\ge\; -nL,
\end{equation}
a finite lower bound not depending on $t$.

\emph{Convergence of the objective value.} By Proposition \ref{prop:monotone}, $\{Q^{(t)}\}$ is non-increasing; combined with the lower bound $-nL$ just established, the monotone convergence theorem for real sequences guarantees that $Q^{(t)} \to Q^*$ for some finite $Q^* \ge -nL$.

\emph{Finite termination of the partition sequence.} There are only finitely many ways to partition $n$ observations into at most $K$ non-empty groups; denote this finite collection of partitions by $\mathfrak{P}$. Under the hypothesis of Proposition \ref{prop:monotone} that Step 5 is run until the DPD estimating equations are satisfied exactly at every iteration, and assuming, as is standard in M-estimation, that these estimating equations admit an essentially unique solution for any fixed, non-empty cluster data, the parameter values $\{\hat\mu_k,\hat\Sigma_k\}$ produced by Step 5 for a given partition $\mathcal C \in \mathfrak P$ depend only on $\mathcal C$ itself, not on the iteration at which that partition is visited. Consequently $Q^{(t)}$ takes values only in the finite set $\{Q(\mathcal C) : \mathcal C \in \mathfrak P\}$, where $Q(\mathcal C)$ denotes the value of $Q$ at the (essentially unique) parameter estimates associated with partition $\mathcal C$.

Whenever $\mathcal{C}^{(t+1)} \ne \mathcal{C}^{(t)}$, the tie-breaking rule assumed above ensures that at least one observation is reassigned to a cluster with a strictly smaller pointwise loss, so combining the strict decrease from the assignment-step argument in the proof of Proposition \ref{prop:monotone}(i) with the non-increase from part (ii) gives
\begin{equation}
Q^{(t+1)} \;\le\; Q\left(\mathcal{C}^{(t+1)}, \{\hat\mu_k^{(t)},\hat\Sigma_k^{(t)}\}\right) \;<\; Q\left(\mathcal{C}^{(t)}, \{\hat\mu_k^{(t)},\hat\Sigma_k^{(t)}\}\right) = Q^{(t)}, \qquad \text{whenever } \mathcal{C}^{(t+1)} \ne \mathcal{C}^{(t)}. \label{eq:strictdecrease}
\end{equation}
Since $\{Q^{(t)}\}$ is a non-increasing sequence taking values in the finite set $\{Q(\mathcal C):\mathcal C \in \mathfrak P\}$, Equation \eqref{eq:strictdecrease} can hold for at most $|\mathfrak P|-1$ values of $t$ before $Q^{(t)}$ must remain constant; as any further change in $\mathcal{C}^{(t)}$ would force a further strict decrease, the partition sequence $\mathcal{C}^{(t)}$ must therefore also stabilize, at some finite iteration $T \le |\mathfrak{P}|-1$, i.e., $\mathcal{C}^{(t)}=\mathcal{C}^{(T)}=:\mathcal{C}^*$ for all $t \ge T$.

\emph{Stability of the fixed point.} Once the partition no longer changes, Step 4 no longer reassigns any observation, so $\mathcal C^*$ is stable under the assignment step given $\{\hat\mu_k^{(T)},\hat\Sigma_k^{(T)}\}$; and since Step 5 solves the DPD estimating equations exactly for $\mathcal{C}^{(T)}=\mathcal{C}^*$, the resulting $\{\hat\mu_k^*,\hat\Sigma_k^*\} := \{\hat\mu_k^{(T)},\hat\Sigma_k^{(T)}\}$ is also stable under Step 5. Hence the algorithm has converged, in a finite number of iterations, to a partition and parameter set that is simultaneously stable under both steps.
\end{proof}

\begin{remark}
Theorem \ref{thm:convergence} establishes that the partition sequence $\mathcal{C}^{(t)}$ stabilizes after finitely many iterations, i.e., $\mathcal{C}^{(t+1)}=\mathcal{C}^{(t)}$ for all $t$ beyond some finite $T$. Once this occurs, Step 5 re-solves the same DPD estimating equations (Section \ref{emper_mult}) for the same, unchanged cluster data $C_k^{(T)}$; provided this fixed-data estimation problem admits an essentially unique solution, a standard regularity condition in M-estimation, the resulting parameter estimates $(\hat\mu_k^{(t)},\hat\Sigma_k^{(t)})$ also cease to change for $t \geq T$. Consequently, checking whether $(\hat\mu_k,\hat\Sigma_k)$ has stabilized between successive iterations, the same practical criterion used for Algorithms \ref{MK-means} and \ref{MDPDK}, is, for Algorithm \ref{DCMDPDK}, a theoretically justified proxy for partition stabilization: unlike for Algorithms \ref{MK-means} and \ref{MDPDK}, where no such guarantee exists, Theorem \ref{thm:convergence} ensures that this criterion will indeed be met after finitely many iterations.
\end{remark}

\begin{remark}
As with classical K-means and its Bregman-divergence generalizations \citep{selim1984kmeans}, Theorem \ref{thm:convergence} guarantees convergence to a stationary point of $Q$, not necessarily its global minimum, since $V_{\alpha,\mu,\Sigma}$ is generally non-convex in $(\mu,\Sigma)$ for $\alpha>0$ and the assignment step is a combinatorial, rather than continuous, optimization. As is standard practice for K-means-type algorithms, multiple random initializations, following the scheme described in Section \ref{sec:init}, help mitigate the risk of convergence to a poor local optimum in this setting as well.
\end{remark}

\begin{remark}
DC-MK-means DPD requires no additional inputs beyond those already used by Algorithm \ref{MDPDK}; the only change is that the assignment step in Step 4 additionally evaluates $|\hat\Sigma_k|$ for each cluster, a quantity already computed as part of the covariance update in Step 5, so the extra computational cost of enforcing convergence is negligible in practice.
\end{remark}

\section{Simulation results and comparison to other clustering algorithms}
In this section, we evaluate the performance of the proposed MK-means DPD and DC-MK-means DPD algorithms using simulated data, comparing them against standard K-means, Mahalanobis K-means, DBSCAN, and a Gaussian Mixture Model. We first introduce the internal and external evaluation metrics used throughout this article, including robust counterparts to several classical internal metrics that are themselves sensitive to outliers, before applying these metrics to a simulated dataset containing known cluster memberships and artificially introduced outliers.

\subsection{Performance Evaluation Metrics}
\label{sec:meth2}

In this subsection, we introduce the performance metrics used to assess the proposed methods against existing clustering methods, categorized into internal evaluation metrics, which assess clustering quality using only the data and the resulting partition, and external evaluation metrics, which compare the clustering result against known class labels when available. Because several classical internal metrics are themselves sensitive to outliers, an issue directly relevant to the robustness motivation of this article, we also introduce robust counterparts to these metrics before turning to the external evaluation metrics.

\subsubsection{Internal Evaluation Metrics}
Internal evaluation metrics assess the quality of clustering outcomes without depending on external information such as ground truth labels, relying solely on the geometry of the data and the partition produced by the clustering algorithm. Because no true class labels are required, internal metrics can be computed for any clustering result, including real-world applications where the correct groupings are unknown. Below, we describe the internal evaluation metrics used in this article.

\textbf{$R^2$:} $R^2$ is an internal evaluation metric that compares SSE (indicating dissimilarity between groups or clusters) to SST (representing total dissimilarity in the dataset). It is measured on a scale from 0 to 1, where a higher $R^2$ suggests well-separated and internally homogeneous clusters \citep{sharma1995applied}. It is defined as:
\begin{equation}
R^2 = 1 - \frac{SSE}{SST}. \label{eq:R2}
\end{equation}

\textbf{Silhouette Index (SI):} The silhouette index (SI) is a metric employed in cluster analysis for evaluating clustering quality. It gauges the similarity of an object to its own cluster in comparison to other clusters. For a data point $x_i$, the silhouette value is given by $S(x_i) = \frac{b_i - a_i}{\max \{a_i, b_i\}}$, where $ a_i = \frac{1}{|C_i|-1} \sum_{j \in C_i, j \neq i} d(i,j)$
is the average distance between observation $i$ and all other points in the same cluster $C_i$, and
$b_i = \min_{k \neq C_i} \left( \frac{1}{|C_k|} \sum_{j \in C_k} d(i,j) \right)$
is the minimum average distance between observation $i$ and all points in any other cluster $C_k$. The overall silhouette index is the average of $S(x_i)$ over all observations:
\begin{equation}
  SI = \frac{1}{n}\sum_{i=1}^n S\left(x_i\right). \label{eq:SI}
\end{equation}
The silhouette index ranges from [-1, +1], where a higher value indicates a better clustering algorithm. However, a challenge with the silhouette index is that it tends to be generally higher for convex clusters compared to other cluster concepts \citep{jumadi2019enhancement}.

\textbf{Davies-Bouldin Index (DBI):} The Davies-Bouldin Index (DBI) stands as an internal evaluation metric assessing the average similarity between clusters and their most similar counterparts. It is bounded below by zero, with a lower score indicating superior clustering performance. DBI is particularly effective for convex clusters, as seen in DBSCAN. Writing $\sigma_i$ for the average distance of all elements in cluster $i$ to centroid $c_i$, and $K$ for the number of clusters, its calculation is as follows:
\begin{equation}
    D B=\frac{1}{K} \sum_{i=1}^K \max _{j \neq i}\left(\frac{\sigma_i+\sigma_j}{d\left(c_i, c_j\right)}\right). \label{eq:DBI}
\end{equation}

\textbf{Calinski-Harabasz Index (CHI):} The Calinski-Harabasz Index (CHI) gauges the ratio between dispersion within clusters and dispersion between clusters. A higher score signifies superior clustering outcomes. The Calinski-Harabasz index typically performs better on convex clusters. Writing $\operatorname{tr}\left(B_K\right)$ for the trace of the dispersion matrix between groups and $\operatorname{tr}\left(W_K\right)$ for the trace of the dispersion matrix within the group \citep{liu2022new}, it is expressed mathematically as:
\begin{equation}
 CH=\frac{\operatorname{tr}\left(B_K\right)}{\operatorname{tr}\left(W_K\right)} \times \frac{n-K}{K-1}.  \label{eq:CHI}
\end{equation}

\subsubsection{Robust Internal Evaluation Metrics}
\label{sec:robust_metrics}

The internal evaluation metrics defined above, $R^2$, SI, DBI, and CHI, all rely, either directly or indirectly, on averages and squared Euclidean distances computed around cluster means. A single extreme observation can therefore disproportionately influence these metrics, potentially misrepresenting the quality of an otherwise well-formed clustering solution, in much the same way that outliers can distort the cluster centers and covariance estimates used by the clustering algorithm itself. This concern is particularly relevant for the present study, given the robustness motivation underlying the proposed MK-means DPD method. We therefore employ robust versions of these indexes, in place of their classical counterparts, throughout the performance comparisons reported later in this article. PAMSIL and trimmed $R^2$ are established, widely-used robust alternatives to the silhouette index and $R^2$, respectively, already found in the literature. In addition, we introduce two new robust indexes based on DBI and CHI, which we refer to as the Median Davies-Bouldin Index ($DBI_M$) and the Trimmed Calinski-Harabasz Index ($CHI^{trim}$). These robust indexes are defined below.

\textbf{Trimmed $R^2$ ($TR^2$):} $R^2$, as defined in Equation~\eqref{eq:R2}, can be influenced by outliers, and we therefore use the established, widely-used trimmed $R^2$ \citep{cuesta1997trimmed} as a performance metric instead. Trimmed $R^2$ assesses the proportion of variance in the data explained by the clustering algorithm while excluding a specified percentage of extreme observations. Mathematically, it is defined as:
\begin{equation}
TR^2 = 1 - \frac{SSE_{trimmed}}{SST_{trimmed}}. \label{eq:TR2}
\end{equation}

\textbf{PAM (Medoid) Silhouette Index (PAMSIL):} Rather than representing a cluster by its mean, the Partitioning Around Medoids (PAM) algorithm \citep{kaufman2009finding} lets $m_k$ denote the \emph{medoid} of cluster $C_k$, defined as $m_k = \argmin_{x_i \in C_k} \sum_{x_j \in C_k} d(x_i, x_j)$, for $k=1,\ldots,K$. \citet{vanderlaan2003new} proposed optimizing the silhouette directly around such medoids, referring to this approach as PAMSIL, later formalized and made computationally efficient by \citet{lenssen2022clustering}. Writing $a_i^{med} = d\left(x_i, m_{k(i)}\right)$ and $b_i^{med} = \min_{l \neq k(i)} d\left(x_i, m_l\right)$ for the distance of $x_i$ to its own medoid and to the nearest medoid of any other cluster, and $S^{med}(x_i) = \frac{b_i^{med} - a_i^{med}}{\max\{a_i^{med}, b_i^{med}\}}$, the PAM (Medoid) Silhouette Index is given by
\begin{equation}
    PAMSIL = \frac{1}{n} \sum_{i=1}^n S^{med}(x_i). \label{eq:PAMSIL}
\end{equation}
Since medoids are robust to outliers, this index is less sensitive to outliers than the classical silhouette index.

\textbf{Median Davies-Bouldin Index ($DBI_M$):} We define the Median Davies-Bouldin Index by replacing the mean-based cluster center $c_i$ with a robust, median-based center $c_i^{med}$, obtained as the coordinate-wise (or spatial) median of the points in $C_i$. We also replace the within-cluster dispersion $\sigma_i$ by the median of the within-cluster distances to this center, $\sigma_i^{med} = \operatorname{median}_{x \in C_i}\, d(x, c_i^{med})$. The median is robust to outliers, and hence this index is less sensitive to outliers than the classical DBI. The Median Davies-Bouldin Index is then defined as
\begin{equation}
    DBI_M = \frac{1}{K} \sum_{i=1}^K \max_{j \neq i} \left( \frac{\sigma_i^{med} + \sigma_j^{med}}{d\left(c_i^{med}, c_j^{med}\right)} \right). \label{eq:DBIM}
\end{equation}
As with the classical DBI, lower values of $DBI_M$ indicate better-separated, more internally cohesive clusters.

\textbf{Trimmed Calinski-Harabasz Index ($CHI^{trim}$):} We construct a robust version of CHI using trimming, retaining ordinary (mean-based) cluster centers so that the underlying variance decomposition remains exact on the trimmed sample. For each cluster $C_i$, let $c_i$ denote its ordinary sample mean, and define the trimmed subset $C_i^{trim} = \left\{ x \in C_i : d\left(x, c_i\right) \le q_{1-\alpha}\left( d\left(\cdot, c_i\right) \right) \right\}$, where $q_{1-\alpha}(\cdot)$ denotes the $(1-\alpha)$ quantile of within-cluster distances to $c_i$. Writing $c_i^{trim}$ for the sample mean of the retained points in $C_i^{trim}$, and $c^{trim}$ for the grand mean of all retained points pooled across clusters, we define the trimmed between- and within-cluster dispersions as $\operatorname{tr}\left(B_K^{trim}\right) = \sum_{i=1}^K n_i^{trim} \left\lVert c_i^{trim} - c^{trim} \right\rVert^2$ and $\operatorname{tr}\left(W_K^{trim}\right) = \sum_{i=1}^K \sum_{x \in C_i^{trim}} \left\lVert x - c_i^{trim} \right\rVert^2$, where $n_i^{trim} = \left|C_i^{trim}\right|$. Because these quantities are computed from the genuine sample means of the trimmed data, the identity $\operatorname{tr}(B_K^{trim}) + \operatorname{tr}(W_K^{trim}) = \operatorname{tr}(T_K^{trim})$, where $T_K^{trim}$ denotes the total dispersion of the trimmed sample about $c^{trim}$, continues to hold exactly, preserving the interpretability of the resulting ratio as a between/total variance fraction. Writing $n^{trim} = \sum_{i=1}^K n_i^{trim}$ for the total number of retained (non-trimmed) observations, the Trimmed Calinski-Harabasz Index is defined as
\begin{equation}
    CHI^{trim} = \frac{\operatorname{tr}\left(B_K^{trim}\right)}{\operatorname{tr}\left(W_K^{trim}\right)} \times \frac{n^{trim}-K}{K-1}. \label{eq:CHItrim}
\end{equation}
As with the classical CHI, higher values of $CHI^{trim}$ indicate better-defined clusters, while trimming reduces the tendency of a small number of extreme observations to dominate the squared-distance terms that make up both the numerator and the denominator.

\subsubsection{External Evaluation Metrics}

External evaluation metrics gauge the quality of clustering results by comparing them with external information or ground truth labels, in contrast to the internal metrics defined above, which rely only on the data and the resulting partition. Because external metrics require the true class of each observation to be known, they cannot generally be computed in real-world clustering applications where no ground truth exists; however, they provide a more direct measure of clustering accuracy whenever such labels are available, as is the case in our simulated dataset, where the true cluster memberships are known by construction. Included among these metrics are Accuracy and Cohen's $\kappa$. Despite being commonly applied in classification problems, we adopted these measures in our analysis due to the availability of true class labels in the simulated dataset.
  \begin{table}[H]
      \centering
       \begin{tabular}{cc|cc}
    
         \multicolumn{2}{c|}{} & \multicolumn{2}{c} {\textbf{Cluster labels}} \\
         \multicolumn{2}{c|}{} & \textbf{Positive (P)} & \textbf{Negative (N)} \\
         \hline
        \multirow{2}{*}{\textbf{True class labels}} & \textbf{Positive (P)} & $n_1$ & $n_2$ \\
        & \textbf{Negative (N)} & $n_3$ & $n_4$ \\
       
        \end{tabular}
       \caption{The 2x2 truth table for binary classification.}
   \end{table}

\textbf{Accuracy:} Accuracy quantifies the degree of agreement between the cluster labels and the class labels. It is calculated as follows:
\begin{equation}
Accuracy = \dfrac{n_1+n_4}{n_1+n_2+n_3+n_4}. \label{eq:accuracy}
\end{equation}
Superior clustering performance is indicated by higher accuracy values \citep{ahmed2020k}.

\textbf{Cohen's Kappa ($\kappa$):} The Cohen's $\kappa$ coefficient evaluates the concordance between clustering results and the ground truth. It spans from -1 to 1, where +1 signifies perfect agreement, 0 represents agreement by chance, and negative values imply agreement worse than chance:
\begin{equation}
      \kappa = \dfrac{P(A)-P(E)}{1-P(E)}, \label{eq:kappa}
\end{equation}
where $P(A) = \dfrac{n_1+n_4}{n_1+n_2+n_3+n_4}$ and $P(E) = \dfrac{(n_1+n_2) \times (n_1+n_3)+(n_3+n_4)\times(n_2+n_4)}{(n_1+n_2+n_3+n_4)}$.

\subsection{Simulation data example}
The simulations were conducted using the R statistical software, employing the MASS package for generating mixture models with multivariate normal component distributions and drawing samples from such models. The dataset comprises 750 samples, 2 features, and known classes or labels, with 150, 300, and 300 samples drawn from Groups 1, 2, and 3, respectively. In this simulation, three cluster groups were defined, each with their respective means and covariance matrices. Group 1 has a mean vector of (6, 8) and a covariance matrix with diagonal elements of 2 and off-diagonal elements of 1.4. Group 2 has a mean vector of (11, 6) and a covariance matrix with diagonal elements of 2 and off-diagonal elements of $-1.4$. Group 3 has a mean vector of $(0, 11)$ and a covariance matrix 2 times the identity matrix. 30 outliers were introduced into the dataset by appending them to the existing dataset, resulting in a dataset containing the original clusters and the introduced outliers. The developed robust clustering algorithms, MK-means DPD and its convergent variant DC-MK-means DPD, were compared to standard K-means, Mahalanobis K-means (MK-means), DBSCAN, and the Gaussian Mixture Model (GMM). Various parameter settings for generating clusters were considered to assess the performance of each algorithm. The algorithms assigned observations to clusters without considering their known labels. The main focus of the simulation study was to compare the performance of each algorithm in the presence of outliers. The results are presented in Figure \ref{fig_outliers}.

\begin{figure}[H]
\includegraphics[width=1.0\textwidth]{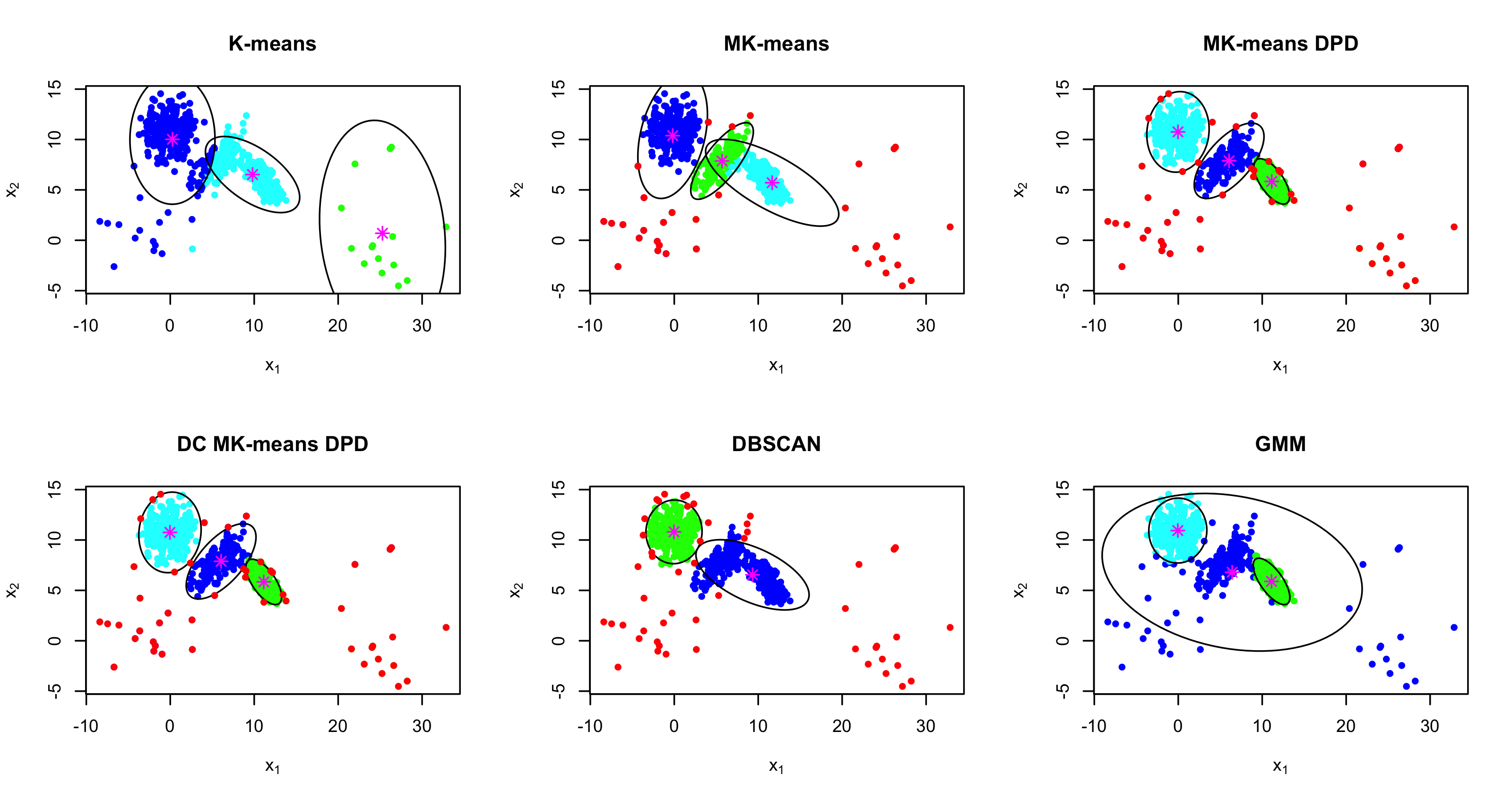} 
  \caption{Plots of clusters from the proposed methods and other existing methods on simulated dataset. Red dots represent outliers detected by the corresponding method, and red stars represent the estimated cluster centers.}
  \label{fig_outliers}
\end{figure}

In Table \ref{table:1} and the corresponding tables in Sections \ref{sec:iris} and \ref{sec:covid}, the \emph{Outlier} column reports the number of observations falling outside the 99\% confidence ellipsoid of their assigned cluster, $(x_i-\hat\mu_k)^T\hat\Sigma_k^{-1}(x_i-\hat\mu_k) > \chi^2_{p,0.01}$, computed using each method's final cluster center and covariance estimates. This provides a concrete illustration of the outlier-detection capability described in the Remark on Advantages of MK-means DPD (Section \ref{sec:mkdpd}).

\begin{table}[h]
\caption{Performance comparison of the proposed methods with existing methods on the simulated dataset.}
\setlength\tabcolsep{3pt}
\begin{center}
\begin{tabular}{ |c|c|c|c|c|c|c|c|c|} 
  \hline
 \diagbox{\textbf{Method}}{\textbf{PM}} & \textbf{Accuracy} & $\boldsymbol{\kappa}$ & $\mathbf{TR^2}$ & $\mathbf{CHI^{trim}}$ &  \textbf{PAMSIL} & $\mathbf{DBI_M}$ & \textbf{Cluster} & \textbf{Outlier}  \\
 \hline
   \textbf{K-means} & 0.788 & 0.660 & 0.801 & 1485.0 &  \textbf{0.735} & \textbf{0.312} & 3 & 0 \\ 
 \hline
   \textbf{MK-means} & 0.950 & 0.923 & 0.900 & 3317.2 & 0.699 & 0.523 & 3 & 35 \\  
 \hline
 \textbf{MK-means DPD} & 0.951 & 0.926 & 0.906 & 3573.4 & 0.704 & 0.537 & 3 & 49 \\  
 \hline
 \textbf{DC MK-means DPD} & \textbf{0.955} & \textbf{0.932} & \textbf{0.908} & \textbf{3645.1} & 0.704 & 0.547 & 3 & 54 \\
 \hline
 \textbf{DBSCAN} & 0.953 & 0.928 & 0.907 & 3614.0 &  0.511 & 0.552 & 3 & 52 \\ 
 \hline
 \textbf{GMM} & 0.938 & 0.907 & 0.791 & 1394.5 & 0.685 & 0.575 & 3 & 0 \\ 
\hline
\end{tabular}
\label{table:1}
\end{center}
\end{table}

To evaluate the proposed algorithms against existing methods, the derived cluster groups were juxtaposed with the known cluster classifications. The percentage of correctly classified observations was then computed for each algorithm. Figure \ref{fig_outliers} clearly illustrates the superior graphical representation of MK-means DPD and DC-MK-means DPD relative to the other methods. As the true cluster groups are known beforehand, we can compare both the external and the robust internal evaluation metrics of the algorithms using this simulated data. Table \ref{table:1} reaffirms that our developed algorithms outperform the other clustering methods considered. DC-MK-means DPD achieves the highest accuracy (0.955), the highest $\kappa$ (0.932), the highest trimmed $R^2$ (0.908), and the highest $CHI^{trim}$ (3645.1) among all methods compared, with MK-means DPD and DBSCAN close behind on every one of these metrics (MK-means DPD: accuracy 0.951, $\kappa$ 0.926, trimmed $R^2$ 0.906, $CHI^{trim}$ 3573.4; DBSCAN: accuracy 0.953, $\kappa$ 0.928, trimmed $R^2$ 0.907, $CHI^{trim}$ 3614.0). Indeed, MK-means DPD and DC-MK-means DPD behave very similarly in terms of overall performance across all of the reported metrics, which is to be expected given that DC-MK-means DPD differs from MK-means DPD only in its assignment criterion, introduced specifically to guarantee convergence rather than to alter the underlying clustering objective. The practical benefit of this modification is instead seen in computational behavior: on this simulated dataset, MK-means DPD required 11 iterations to reach a stable solution, whereas DC-MK-means DPD converged in only 4 iterations, consistent with the finite-convergence guarantee established in Theorem \ref{thm:convergence}. K-means attains the highest PAMSIL (0.735) and the lowest, and therefore best, $DBI_M$ (0.312) among all methods, reflecting the tendency of these internal indices to favor compact, convex cluster shapes even when, as shown by the comparatively lower accuracy and $\kappa$ values for K-means, the resulting partition agrees less well with the true underlying classes. DBSCAN performs competitively with MK-means DPD and DC-MK-means DPD on accuracy, $\kappa$, trimmed $R^2$, and $CHI^{trim}$, correctly recovering all 3 true clusters, but attains a noticeably lower PAMSIL (0.511) than the two proposed methods (0.704 each) and flags a substantial number of outliers (52). As shown by the red dots in Figure \ref{fig_outliers}, MK-means DPD and DC-MK-means DPD successfully detect most of the 30 outliers introduced into the dataset, illustrating the outlier-detection capability described in Section \ref{sec:mkdpd}. Taken together, these results support the robustness of the proposed DPD-based approach to outlier contamination, while showing that the additional structural modification introduced in DC-MK-means DPD achieves the same strong clustering performance as MK-means DPD, with substantially faster convergence, and remains competitive with, or superior to, DBSCAN across all metrics considered.

\section{Real data application}

Having established the performance of the proposed MK-means DPD algorithm on simulated data with known cluster memberships and artificially introduced outliers, we now turn to two real-world applications in which the true underlying structure is not fully known in advance and outliers arise naturally from the data itself. We first apply the method to the well-known Iris dataset, for which the true species labels are available and can be used to assess clustering accuracy directly. We then apply the method to Covid-19 case fatality rate, infection rate, and related demographic and socio-economic data for countries around the world, a genuinely real-world setting containing natural outliers, in order to examine the resulting clusters and their geographic distribution.

\subsection{Application to Iris data}
\label{sec:iris}

The Iris dataset, sourced from Scikit-learn, a machine learning library, consists of 150 samples distributed across three species: Setosa, Versicolor, and Virginica. Each class comprises 50 sample data points. The dataset features four numeric attributes denoting measurements in centimeters, namely Sepal length, Sepal width, Petal length, and Petal width. The fifth attribute is qualitative, signifying the class name corresponding to the plant species. This allows us to treat the data as if the true clusters are known, so that external evaluation metrics can be computed. The objective is to evaluate and compare the performance of the two proposed robust methods alongside existing clustering methods, assessing their robustness and effectiveness in the presence of outliers.

The assessment involves visual inspection and comparison of several performance metrics as defined in the previous section. The two parameters required for our modified MK-means DPD algorithm are the number of clusters, K, and the DPD parameter, $\alpha$. K is chosen to be 3 based on the 3 known species in the dataset. For this analysis, $\alpha$ is set to 0.2. Both external and internal evaluation metrics were used to compare the two new methods with the existing ones. The resulting cluster plots are shown in Figures \ref{fig: width} and \ref{fig: length}.

\begin{center}
\begin{figure}[h]
\includegraphics[width=1\textwidth]{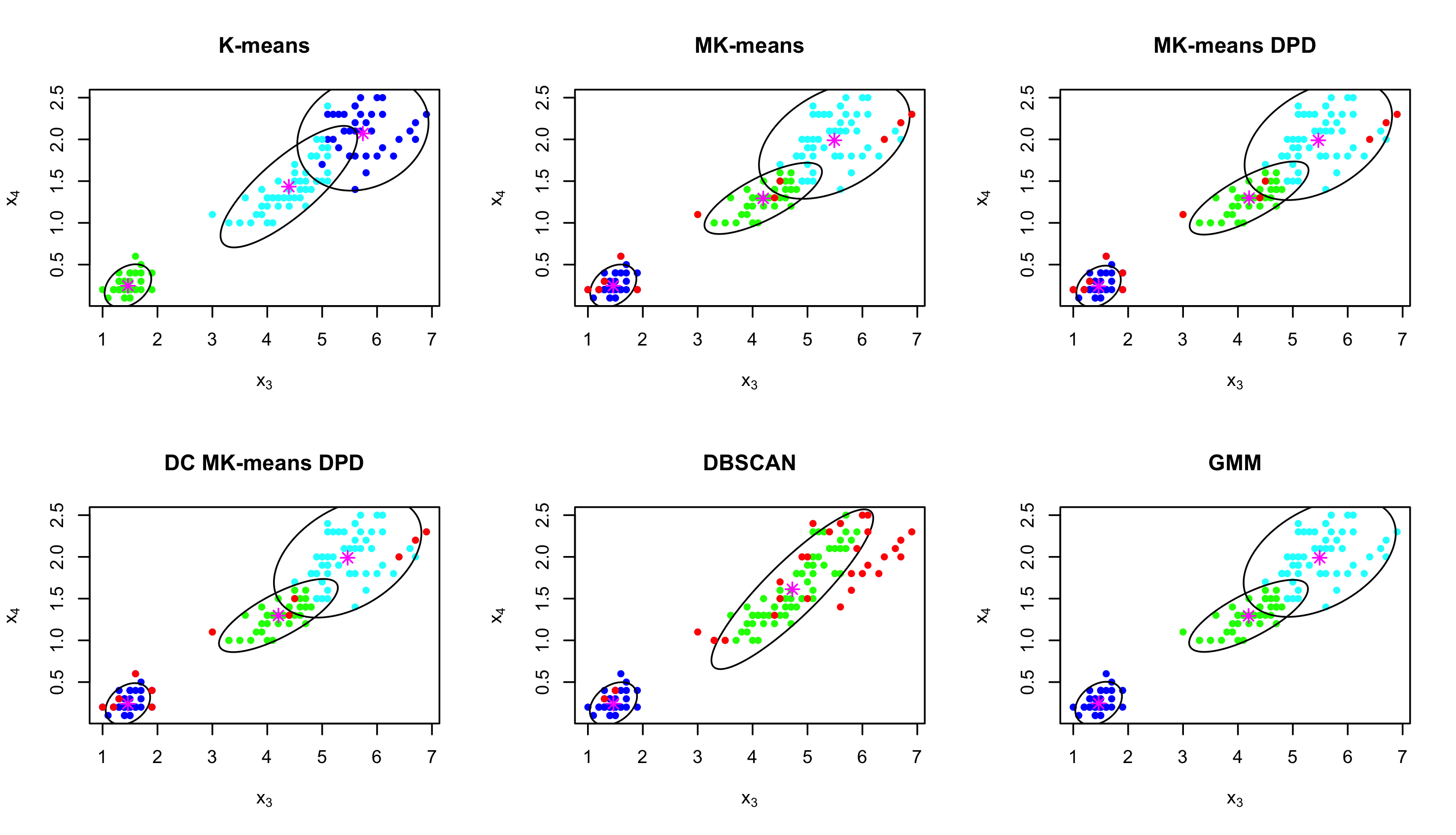}
  \caption{Plots of clusters showing petal length and petal width, as obtained from various algorithms using Iris data. Red dots represent outliers detected by the corresponding method, and red stars represent the estimated cluster centers.}
    \label{fig: width}
  \end{figure}
\end{center}

\begin{center}
\begin{figure}[h]
\includegraphics[width=1.0\textwidth]{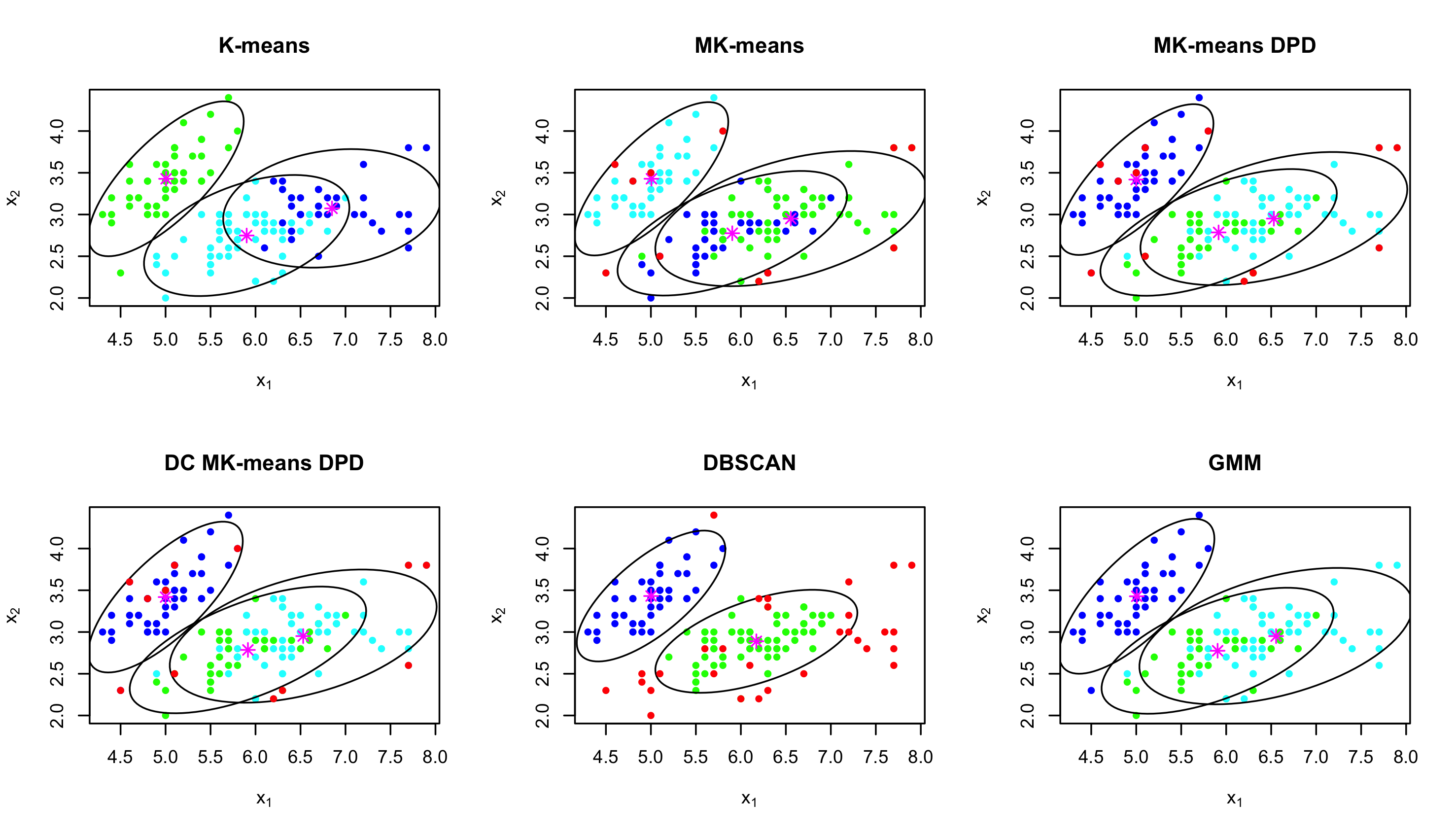}
  \caption{Plots of clusters showing sepal length and sepal width, as obtained from various algorithms using Iris data. Red dots represent outliers detected by the corresponding method, and red stars represent the estimated cluster centers.}
    \label{fig: length}
  \end{figure}
\end{center}

\begin{table}[H]
\caption{Performance comparison of the proposed methods with existing methods on Iris data.}
\setlength\tabcolsep{3pt}
\begin{center}
\begin{tabular}{ |c|c|c|c|c|c|c|c|c|} 
  \hline
 \diagbox{\textbf{Method}}{\textbf{PM}} & \textbf{Accuracy} & $\boldsymbol{\kappa}$ & $\mathbf{TR^2}$ & $\mathbf{CHI^{trim}}$ &  \textbf{PAMSIL} & $\mathbf{DBI_M}$ & \textbf{Cluster} & \textbf{Outlier}  \\
 \hline
   \textbf{K-means} & 0.833 & 0.750 & \textbf{0.821} & \textbf{318.5} &  0.574 & 0.834 & 3 & 0 \\ 
 \hline
   \textbf{MK-means} & 0.960 & 0.940 & 0.738 & 195.9 & 0.487 & 0.957 & 3 & 11 \\  
 \hline
 \textbf{MK-means DPD} & \textbf{0.967} & \textbf{0.950} & 0.740 & 197.8 & 0.494 & 0.945 & 3 & 12 \\  
 \hline
 \textbf{DC MK-means DPD} & \textbf{0.967} & \textbf{0.950} & 0.740 & 197.8 & 0.494 & 0.945 & 3 & 12 \\
 \hline
 \textbf{DBSCAN} & 0.680 & 0.520 & 0.687 & 152.3 &  \textbf{0.624} & \textbf{0.540} & 2 & 4 \\ 
 \hline
 \textbf{GMM} & \textbf{0.967} & \textbf{0.950} & 0.793 & 265.7 & 0.494 & 0.945 & 3 & 0 \\ 
\hline
\end{tabular}
\label{table:2}
\end{center}
\end{table}

In Figures \ref{fig: width} and \ref{fig: length}, MK-means DPD, DC-MK-means DPD, and the Gaussian Mixture Model appear visually superior, consistent with their accuracy and $\kappa$ values reported in Table \ref{table:2}. MK-means DPD, DC-MK-means DPD, and GMM all achieve the same highest accuracy (0.967) and $\kappa$ (0.950) among the methods compared, with MK-means close behind (accuracy 0.960, $\kappa$ 0.940). K-means performs noticeably worse in terms of accuracy (0.833) and $\kappa$ (0.750), while nonetheless attaining the highest trimmed $R^2$ (0.821) and the highest $CHI^{trim}$ (318.5) among all methods considered.

This pattern, in which K-means and the DPD-based methods trade off which metrics they lead, differs from what was observed in the simulation study of Section 4.2, where DC-MK-means DPD dominated essentially every metric. A likely explanation is that the Iris dataset, unlike the simulated dataset, contains very few genuine outliers or points of substantial ambiguity between species. In the near-absence of contamination, the robustness of MK-means DPD and DC-MK-means DPD confers only a modest additional benefit over classical methods, with MK-means DPD, DC-MK-means DPD, and GMM achieving similar, though not identical, performance on the metrics reported in Table \ref{table:2}.

Although DBSCAN attains the highest PAMSIL (0.624) and the lowest, and therefore best, $DBI_M$ (0.540) among all methods, its accuracy (0.680) and $\kappa$ (0.520) are considerably lower than those of the other methods, since it identifies only 2 of the 3 true species, failing to distinguish the Versicolor and Virginica species, whose sepal and petal measurements are known to overlap substantially. This illustrates that internal and external metrics do not always agree, underscoring the value of considering both together.

\subsection{Application to Covid-19 data}
\label{sec:covid}

Next, we apply the proposed robust MK-means DPD method to cluster countries based jointly on Covid-19 case fatality rate (CFR), infection rate, and a set of demographic and socio-economic variables, namely GDP per capita, number of hospital beds, infant mortality rate, life expectancy, and the percentage of the population aged 65 and older, drawn from the countries of the world dataset. The primary objective of this cluster analysis is to identify coherent groupings of countries that share similar pandemic outcomes together with similar demographic and healthcare characteristics, with GDP per capita of particular interest as a convenient measure of a nation's wealth. More broadly, this clustering study is of practical importance, as it can reveal empirical features and relationships among case fatality rate, infection rate, and the demographic and healthcare characteristics of countries that may not otherwise be apparent, thereby offering insight into how factors such as wealth, healthcare capacity, and population age structure relate to the way the pandemic affected different countries.

This analysis involved two primary datasets. The initial dataset, sourced from the Johns Hopkins COVID database (https://www.kaggle.com/datasets/antgoldbloom/covid19-data-from-john-hopkins-university), provided daily data on reported cases and deaths in each country. Additional variables, including CFR and infection rate, were derived from this dataset, with CFR calculated as the ratio of deaths to cases multiplied by 100 and infection rate obtained by dividing total cases by the country's population. The second dataset, comprising health-related and demographic information for countries around the world, included variables such as GDP per capita, literacy percentage, phone count, land usage, climate region, economic industries, and overall death rate, and is available from the following sources:\\
https://www.kaggle.com/code/mehmettek/data-science-with-world-countries/data?select=countries+of+the+world.csv.\\
https://www.kaggle.com/code/mehmettek/data-science-with-world-countries/notebook

However, as shown in Figures \ref{cfr} and \ref{inf}, both CFR and infection rate contain a substantial number of outliers, driven in large part by small, densely populated nations such as Singapore and Monaco, where the entire country effectively functions as a single dense urban area. In the presence of such outliers, traditional clustering methods, including the standard K-means and Mahalanobis-distance K-means algorithms, are prone to distorted estimates of cluster centers and covariance matrices, and the resulting clusters can consequently portray a substantially different, and potentially misleading, picture of the underlying relationships in the data. This is precisely the setting for which the proposed robust DC-MK-means DPD method is designed, and applying it here provides a meaningful real-world demonstration of its robustness. The clusters obtained from the robust method are displayed in the spider plot in Figure \ref{fig: covid star}, which we use to better understand the relationships and trends present in this data, and in the world map in Figure \ref{fig: covid world}, which shows the geographic distribution of the resulting clusters across countries.

\begin{figure}[h]
    \centering
    \begin{minipage}{0.48\textwidth}
        \centering
        \includegraphics[width=0.65\textwidth]{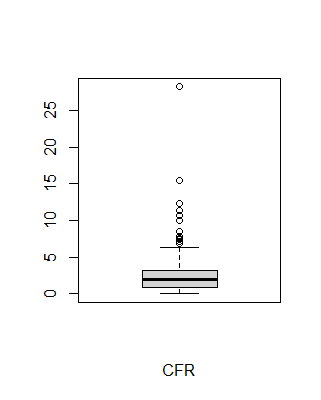}
        \caption{Boxplot of Covid-19 CFR across different countries}
        \label{cfr}
    \end{minipage}\hfill
    \begin{minipage}{0.48\textwidth}
        \centering
        \includegraphics[width=0.65\textwidth]{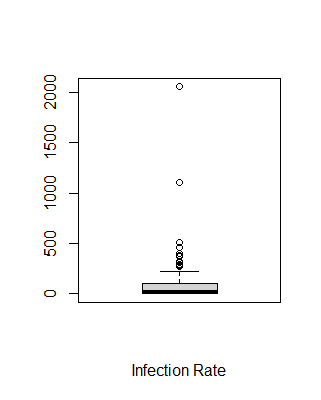}
        \caption{Boxplot of Covid-19 infection rate across different countries}
        \label{inf}
    \end{minipage}
\end{figure}

Since the true clusters of countries are not known in this real-world setting, external evaluation metrics cannot be computed, and we instead compare the candidate clustering methods using the robust internal evaluation metrics introduced in Section \ref{sec:robust_metrics}. Table \ref{table:3} reports the results. DC-MK-means DPD achieves the highest trimmed $R^2$ (0.672), the highest $CHI^{trim}$ (173.9), and the lowest, and therefore best, $DBI_M$ (0.875) among all methods compared, with MK-means DPD close behind on trimmed $R^2$ and $CHI^{trim}$; together, the two proposed methods clearly outperform the remaining methods on these metrics. K-means attains the highest PAMSIL (0.455), with DC-MK-means DPD close behind (0.434); the remaining methods report somewhat lower PAMSIL values, ranging from 0.323 to 0.384. Motivated by these results, the remainder of this section focuses on the clusters obtained from the proposed DC-MK-means DPD method.

\begin{table}[h]
\caption{Performance comparison of the proposed methods with existing methods on the Covid-19 dataset.}
\setlength\tabcolsep{3pt}
\begin{center}
\begin{tabular}{ |c|c|c|c|c|c|c|} 
  \hline
 \diagbox{\textbf{Method}}{\textbf{PM}} & $\mathbf{TR^2}$ & $\mathbf{CHI^{trim}}$ &  \textbf{PAMSIL} & $\mathbf{DBI_M}$ & \textbf{Cluster} & \textbf{Outlier}  \\
 \hline
   \textbf{K-means} & 0.649 & 156.9 & \textbf{0.455} & 0.989 & 3 & 0 \\ 
 \hline
   \textbf{MK-means} & 0.562 & 108.9 & 0.384 & 1.079 & 3 & 17 \\  
 \hline
 \textbf{MK-means DPD} & 0.660 & 165.2 & 0.383 & 1.127 & 3 & 47 \\  
 \hline
 \textbf{DC MK-means DPD} & \textbf{0.672} & \textbf{173.9} & 0.434 & \textbf{0.875} & 3 & 51 \\
 \hline
 \textbf{DBSCAN} & 0.266 & 30.8 & 0.323 & 2.324 & 2 & 68 \\ 
 \hline
 \textbf{GMM} & 0.474 & 76.7 & 0.354 & 1.435 & 3 & 0 \\ 
\hline
\end{tabular}
\label{table:3}
\end{center}
\end{table}

\begin{figure}[h]
\centering
\includegraphics[width=0.9\textwidth]{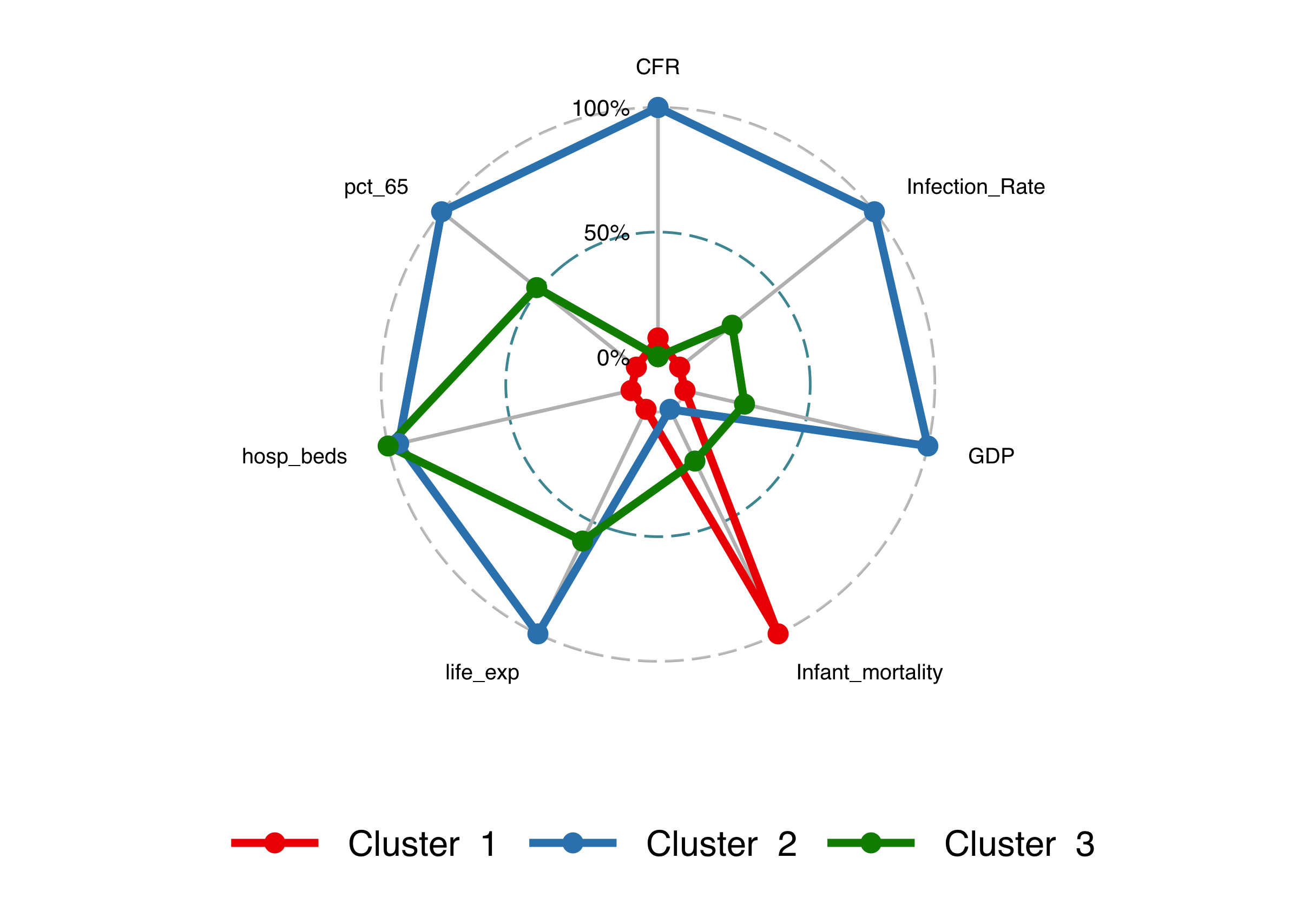}
  \caption{Spider (radar) plot comparing the three country clusters identified by DC-MK-means DPD across seven variables: CFR, infection rate, GDP per capita, infant mortality, life expectancy, number of hospital beds, and the percentage of the population aged 65 and above.}
  \label{fig: covid star}
\end{figure}

Next, we study the relationship between the demographic variables, CFR, and infection rate across the identified clusters using a spider plot (Figure \ref{fig: covid star}). In a spider plot, each spoke represents one variable, rescaled so that 0\% and 100\% denote the lowest and highest values observed across all countries, respectively, with a cluster's polygon extending further outward on a given axis indicating a higher value on that variable relative to other countries.

Viewed this way, the spider plot in Figure \ref{fig: covid star} shows that DC-MK-means DPD identifies three well-separated, internally coherent country profiles rather than arbitrary or overlapping groupings. Cluster 2 combines high CFR, infection rate, GDP per capita, life expectancy, and elderly population share with low infant mortality, consistent with wealthier, more developed nations; its elevated CFR despite strong healthcare infrastructure is plausibly explained by its older population, since COVID-19 mortality is well known to rise sharply with age. Cluster 3 combines the lowest CFR with the highest hospital-bed rate among the three clusters, an intuitive pairing suggesting that greater healthcare capacity is associated with lower fatality, independent of the wealth and age effects driving Cluster 2. Cluster 1 is low across nearly every variable except infant mortality, where it is highest, consistent with lower-income countries combining limited healthcare capacity with higher baseline infant mortality; its low CFR and infection rate may partly reflect limited testing and reporting capacity rather than a genuinely lower disease burden. Overall, the distinct, non-overlapping shape of each cluster indicates that these seven variables jointly define coherent, well-separated groups of countries, rather than producing three near-identical profiles.

\begin{figure}[h]
\centering
\includegraphics[width=0.9\textwidth]{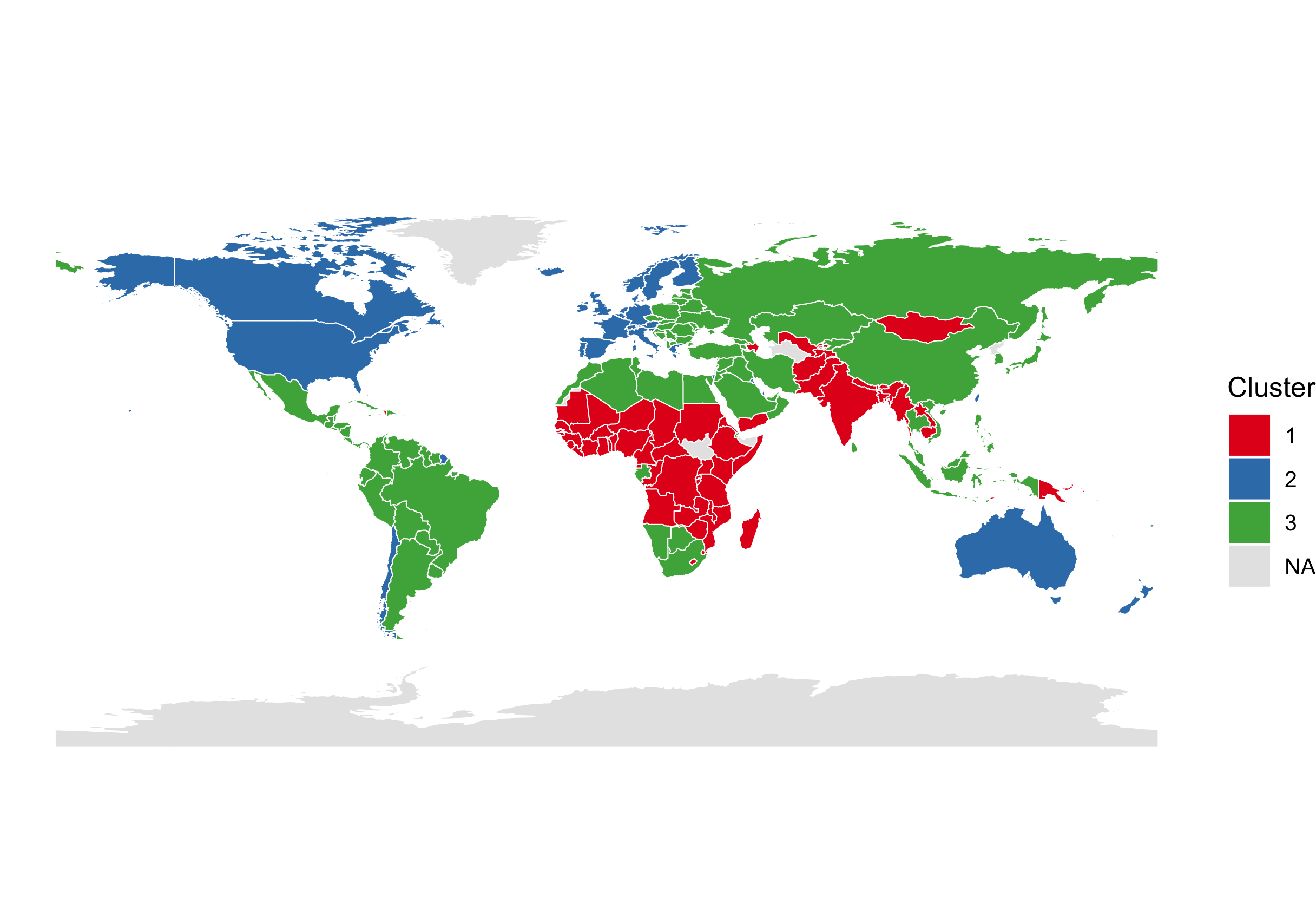}
  \caption{World map colored by the three clusters identified by the DC-MK-means DPD algorithm, with countries not included in the analysis (due to missing data) shown in gray.}
  \label{fig: covid world}
\end{figure}

The geographic distribution of the clusters, shown in Figure \ref{fig: covid world}, offers a complementary view to the spider plot in Figure \ref{fig: covid star} and reinforces the interpretation given above. Cluster 2 (the high-CFR, high-infection, high-GDP, aging-population profile) is concentrated in North America, Australia, New Zealand, and parts of Northern and Western Europe, consistent with its association with wealthier and more developed nations. Cluster 3 (the low-CFR, high-hospital-capacity profile) covers most of Europe, Russia, China, much of South America, and parts of the Middle East and Southern Africa, most of which have relatively well-developed healthcare systems. Cluster 1 (countries with comparatively less socio-economic development and elevated infant mortality) is concentrated across most of Africa, South Asia (including India, Pakistan, and Bangladesh), and parts of Southeast and Central Asia. This broad geographic coherence, with each cluster corresponding to a recognizable region or set of regions rather than being scattered randomly across the globe, provides further support for the validity of the clustering solution, since geographic location was not itself used to form the clusters.

\section{Conclusion}

We introduced two robust K-means clustering algorithms, MK-means DPD and its convergent variant DC-MK-means DPD, incorporating Mahalanobis distance as the distance metric and density power divergence (DPD) measures to estimate cluster centers and covariance matrices at each iteration. This approach makes the algorithms less sensitive to outliers or extreme values compared to traditional clustering methods, while the inclusion of Mahalanobis distance ensures adaptability to heterogeneous and non-spherical clusters. Mahalanobis distance-based K-means clustering, on which our methods are built, is known to lack a general convergence guarantee; to address this, we introduced DC-MK-means DPD, which redefines the cluster assignment step in terms of a pointwise density power divergence loss, and established a formal theorem (Theorem \ref{thm:convergence}) guaranteeing that this variant converges in a finite number of iterations. This theoretical guarantee was reflected empirically in our simulation study, where MK-means DPD required 11 iterations to converge, compared to only 4 for DC-MK-means DPD, while the two algorithms achieved comparable clustering performance across all reported metrics. We also introduced two new robust internal evaluation indexes, the Median Davies-Bouldin Index and the Trimmed Calinski-Harabasz Index, to complement existing robust metrics such as trimmed $R^2$ and PAMSIL, ensuring that the performance comparisons reported throughout this article are not themselves distorted by outliers. Through a simulation study, we demonstrated the strong performance of our proposed algorithms relative to four commonly used clustering methods: K-means, Mahalanobis K-means, DBSCAN, and Gaussian Mixture Models. Subsequently, we applied the methodology to two real datasets: 1) Iris data and 2) Covid-19 data. The Iris application, in which genuine outliers are scarce, showed that the robustness of MK-means DPD and DC-MK-means DPD confers only a modest advantage over classical methods in the near-absence of contamination, reinforcing the broader picture established by the simulation study and the Covid-19 application: the benefit of the proposed DPD-based approach is most pronounced specifically when outliers are present in the data.

While our simulation study and real-world applications assumed a known number of clusters, it is worth noting that the developed initialization scheme (Section \ref{sec:init}) also has the potential to help identify the optimal number of clusters within a given dataset, and we plan to explore this aspect of the methodology in greater detail in future studies. An additional direct application of the proposed algorithms is their ability to aid in the identification of outliers, addressing a challenge often encountered in high-dimensional data. As with classical K-means, the convergence guarantee established for DC-MK-means DPD is to a stationary point rather than a global optimum, and extending these theoretical results, as well as developing a principled, data-driven procedure for selecting the DPD tuning parameter $\alpha$, remain natural directions for future work.

The application to the Covid-19 dataset illustrated the practical value of the proposed robust clustering approach in a real-world setting containing substantial, naturally occurring outliers. The clusters obtained using DC-MK-means DPD, based jointly on case fatality rate, infection rate, and a set of demographic and healthcare variables (GDP per capita, hospital beds, infant mortality, life expectancy, and population age structure), corresponded to coherent and interpretable groupings of countries that also exhibited a clear geographic pattern, despite geographic location not itself being used to form the clusters. This finding serves as motivation for future research in this area; our specific plans include conducting more in-depth investigations into how these clusters relate to other country-level factors not used in the clustering itself, such as literacy rates, climate region, and economic structure, using regression and supervised learning models within each of the identified clusters.

\bibliography{Clustering_Reference}

\end{document}